\documentclass[journal,twoside,web,letter]{ieeecolor_noruler}
\usepackage{generic}
\usepackage{cite}
\usepackage{amsmath,amssymb,amsfonts}
\usepackage{algorithmic}
\usepackage{graphicx}
\usepackage{algorithm,algorithmic}
\usepackage{amssymb}
\usepackage{hyperref}
\hypersetup{hidelinks=true}
\usepackage{textcomp}
\usepackage{wrapfig}
\usepackage{nicematrix}
\usepackage{mathrsfs}

\usepackage{tikz}
\usepackage{tikz-cd} 
\usepackage{caption}  
\usetikzlibrary{matrix}
\usepackage{graphicx}
\usetikzlibrary{matrix,arrows,decorations.pathmorphing}
\usepackage{arydshln}
\usepackage[columnwise,mathlines]{lineno}
\newtheorem{corol}{Corollary}
\newtheorem{definition}{Definition}

\newtheorem{lemma}{Lemma}
\newtheorem{remark}{Remark}
\newtheorem{thm}{Theorem}

\newcommand{\im}{\operatorname{im}}

\DeclareMathOperator{\rank}{rank}
\DeclareMathOperator{\R}{\mathbb{R}}

\def\BibTeX{{\rm B\kern-.05em{\sc i\kern-.025em b}\kern-.08em
		T\kern-.1667em\lower.7ex\hbox{E}\kern-.125emX}}
\begin{document}
	\title{Unified Eigenvalue–Eigenspace Criteria for Functional Properties of Linear Systems  and a Generalized Separation Principle}
	\author{Tyrone Fernando
		\thanks{T. Fernando is with the Department of Electrical, Electronic and Computer Engineering, University of Western Australia (UWA), 35 Stirling Highway, Crawley, WA 6009, Australia. (email:  tyrone.fernando@uwa.edu.au)}}
	
	\maketitle
	
	\begin{abstract}
Classical controllability and observability admit equivalent Popov–Belevitch–Hautus (PBH) tests based on eigenvalue-wise rank conditions. This paper extends this framework to functional properties of linear systems, establishing necessary and sufficient PBH-style conditions for Functional Controllability, Functional Stabilizability, Functional Observability, and Functional Detectability in terms of generalized eigenspaces. By contrast, Target Output Controllability is shown not to admit an independent eigenvalue-wise characterization in general; a necessary and sufficient eigenstructure condition is derived instead. We further introduce Intrinsic Functional Controllability and Intrinsic Functional Stabilizability, which give necessary and sufficient conditions for the existence of the augmentation matrices required for functional controller and observer synthesis. These intrinsic properties yield a Generalized Separation Principle that recovers the classical separation principle as a special case.
	\end{abstract}

	\begin{IEEEkeywords}
		Intrinsic Functional Controllability, Intrinsic Functional Stabilizability, Functional Controllability, Functional Observability, Functional Detectability, Functional Stabilizability, Target Output Controllability
	\end{IEEEkeywords}

\section{Introduction}

Classical controllability and observability are fundamental concepts in linear systems theory, admitting both geometric and Popov--Belevitch--Hautus (PBH) characterizations. Motivated by applications in large-scale and networked systems, increasing attention has been devoted to \emph{functional} generalizations of these properties \cite{ref8n}--\cite{ref9n}. Rather than controlling or estimating the full state vector, the objective is to control or reconstruct a prescribed linear functional of the state vector, representing quantities of practical interest such as aggregate power, population totals, or system performance measures. This shift leads naturally to functional analogues of controllability, stabilizability, observability, and detectability.

Existing work formulates functional system properties within Kalman's geometric framework \cite{ref2ty}. Subspace-based characterizations have been established for Functional Observability \cite{ref10,ref3n,ref10n}, Functional Detectability \cite{ref11n}, Functional Controllability and Functional Stabilizability \cite{ref4n}, and Target Output Controllability \cite{ref5n}. These formulations preserve the classical dualities and reduce to the standard notions when the prescribed functional coincides with the full state. Functional Observability has also been extended to sampled-data, networked, and nonlinear systems \cite{ref12n,ref13n,ref13nc,ref7n,ref16n,ref5ty,ref13nb,15new}.

However, PBH-style characterizations \cite{hautus1969} of functional properties remain limited and have generally been confined to special classes of systems \cite{ref8n,ref7n,ref6n}, often relying on restrictive assumptions such as diagonalizability or particular Jordan structures. Our earlier work \cite{ref4n} introduced functional controllability, stabilizability, observability, and detectability from a geometric (subspace) perspective. The present paper develops the eigenvalue--eigenspace counterpart of that theory, giving necessary and sufficient PBH-style tests for each property.

Furthermore, \cite{ref4n} established a generalized separation principle based on an augmentation-based constructive procedure, for which the existence of the required augmentation matrices cannot, in general, be verified \emph{a priori}. The present paper complements and extends those results by introducing the notions of \emph{Intrinsic Functional Controllability} and \emph{Intrinsic Functional Stabilizability}, which provide verifiable conditions for the existence of the required augmentation matrices and hence for the generalized separation principle.

\noindent\textbf{Contributions.}
Building upon the geometric framework developed in \cite{ref4n}, the main contributions of this paper are as follows:
(i) A unified PBH-style framework is developed, yielding necessary and sufficient eigenvalue-based criteria for Functional Controllability, Functional Stabilizability, Functional Observability, and Functional Detectability.
(ii) A necessary and sufficient eigenstructure characterization of Target Output Controllability is established, showing that this property does not, in general, admit an independent eigenvalue-wise characterization.
(iii) Intrinsic Functional Controllability and Intrinsic Functional Stabilizability are introduced, yielding verifiable conditions for the existence of the augmentation matrices required for functional controller and observer synthesis and establishing a Generalized Separation Principle with the classical separation principle as a special case.

\noindent\textbf{Notation.}
The spectrum of a matrix $A$, i.e., the set of its eigenvalues, is denoted
by $\sigma(A)$. For a complex number $\lambda$, $\Re(\lambda)$ denotes its
real part. 
For a matrix $A$, a nonzero vector $v$ is a \emph{generalized left eigenvector} of order $k$ associated with $\lambda\in\sigma(A)$ if $(\lambda I-A^{\mathsf T})^kv=\mathbf 0$ and $(\lambda I-A^{\mathsf T})^{k-1}v\neq\mathbf 0$; a \emph{generalized right eigenvector} of order $k$ is defined analogously via $(\lambda I-A)^kv=\mathbf 0$ and $(\lambda I-A)^{k-1}v\neq\mathbf 0$. An eigenvector corresponds to the case $k=1$.
The controllability matrix of the pair $(A,B)$ is denoted
by $\mathcal C_{(A,B)}$, and the observability matrix of the pair
$(A,C)$ is denoted by $\mathcal O_{(A,C)}$. For a matrix $M$, $M^{-}$ denotes a generalized inverse satisfying
$MM^{-}M=M$. The identity matrix of appropriate dimensions is denoted by
$I$. The image, kernel, and rank of $M$ are denoted by
$\im(M)$, $\ker(M)$, and $\rank(M)$, respectively. The symbol $\oplus$ denotes the direct sum of subspaces. Throughout the paper, $e_i$ denotes the $i$-th standard
basis vector. A matrix $F$ is said to annihilate a subspace
$\mathcal V$ if $\mathcal V\subseteq\ker(F)$. Throughout this paper, the reachable and unobservable subspaces refer to
$\im\mathcal{C}_{(A,B)}$ and
$\ker\mathcal{O}_{(A,C)}$, respectively. Observable and uncontrollable
subspaces refer specifically to their orthogonal complements,
$\im\mathcal{O}_{(A,C)}^{\mathsf T}$ and
$\ker\mathcal{C}_{(A,B)}^{\mathsf T}$, respectively.

\section{Definitions and Preliminary Results}

Consider the continuous-time linear time-invariant system
\begin{IEEEeqnarray}{rcl}
	\dot{x}(t) &=& Ax(t)+Bu(t) \nonumber\\
	y(t) &=& Cx(t) \nonumber \label{1b}
\end{IEEEeqnarray}
where $x(t)\in\mathbb{R}^n$ is the state vector, $u(t)\in\mathbb{R}^m$ is the input, and $y(t)\in\mathbb{R}^p$ is the measured output. The matrices
$A\in\mathbb{R}^{n\times n}$, $B\in\mathbb{R}^{n\times m}$, and the full row-rank matrix
$C\in\mathbb{R}^{p\times n}$ denote the system, input, and output matrices, respectively.

We begin by recalling the functional properties considered in this paper. These extend the corresponding classical notions by replacing the objective of controlling or estimating the full state vector with the linear functional
\[
z(t)=Fx(t),
\]
where $F\in\mathbb{R}^{r\times n}$. Functional Controllability and Functional Stabilizability are characterized in terms of the controllable and stabilizable subspaces, respectively, while Functional Observability and Functional Detectability are defined dually in terms of the observable and detectable subspaces.

\begin{definition}[Functional Controllability (FC)]\label{def1}
	The functional \(z(t)=Fx(t)\) is controllable, or equivalently the triple
	\((A,B,F^{\mathsf T})\) is \emph{functional controllable} if and only if
	the functional subspace \(\im(F^{\mathsf T})\) is contained in
	the reachable subspace of the pair \((A,B)\), that is,
	\(\im(F^{\mathsf T})\subseteq\im\mathcal{C}_{(A,B)}\).
\end{definition}

The following standard linear-algebraic result is used repeatedly
throughout the paper.
\medskip

\begin{lemma}[Image--Kernel Duality]\label{lem:im-ker-duality}
	Let $X\in\mathbb R^{n\times m_X}$ and $Y\in\mathbb R^{n\times m_Y}$ be matrices with the same number of rows.
	The following statements are equivalent:
	\begin{enumerate}
		\item[i.] \label{it:imker1}$\im(X)\subseteq \im(Y)$
		\item[ii.]\label{it:imker2}$\ker(Y^{\mathsf T})\subseteq \ker(X^{\mathsf T})$
		\item[iii.]\label{it:imker3} $\rank\!\begin{pmatrix} Y & X \end{pmatrix}=\rank(Y)$
		\item[iv.]\label{it:imker4}$\rank\!\begin{pmatrix} Y^{\mathsf T} \\ X^{\mathsf T} \end{pmatrix}=\rank(Y^{\mathsf T})$
	\end{enumerate}
\end{lemma}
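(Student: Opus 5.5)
The plan is to establish the cycle of implications (i)$\Leftrightarrow$(ii), (i)$\Leftrightarrow$(iii) and (iii)$\Leftrightarrow$(iv). The only tool needed is the fundamental orthogonality relation $\ker(M^{\mathsf T})=\im(M)^{\perp}$ for a real matrix $M$, together with the rank--nullity theorem and the invariance of rank under transposition.

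For (i)$\Leftrightarrow$(ii), I will take orthogonal complements. The inclusion $\im(X)\subseteq\im(Y)$ holds if and only if $\im(Y)^{\perp}\subseteq\im(X)^{\perp}$. This is because complementation reverses inclusions, and in the finite-dimensional space $\mathbb R^n$ we have $(\mathcal V^{\perp})^{\perp}=\mathcal V$, which gives the reverse direction. Substituting $\im(Y)^{\perp}=\ker(Y^{\mathsf T})$ and $\im(X)^{\perp}=\ker(X^{\mathsf T})$ then yields (ii).

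For (i)$\Leftrightarrow$(iii), I will use the identity $\im\begin{pmatrix} Y & X\end{pmatrix}=\im(Y)+\im(X)$. This sum always contains $\im(Y)$, so $\rank\begin{pmatrix} Y & X\end{pmatrix}\ge\rank(Y)$, with equality exactly when the two subspaces coincide. Since one subspace contains the other, equality of dimensions is equivalent to equality of the subspaces. Equality of the subspaces holds precisely when $\im(X)\subseteq\im(Y)$. Finally, (iii)$\Leftrightarrow$(iv) is immediate because $\begin{pmatrix} Y^{\mathsf T}\\ X^{\mathsf T}\end{pmatrix}=\begin{pmatrix} Y & X\end{pmatrix}^{\mathsf T}$ and rank is preserved under transposition.

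None of these steps is a genuine obstacle. The only points that need care are stating that $\ker(M^{\mathsf T})=\im(M)^{\perp}$ uses the real inner product (the matrices are real, so no conjugation issue arises), and noting that $(\mathcal V^{\perp})^{\perp}=\mathcal V$ relies on finite dimensionality. Both are standard facts and can be cited without proof.
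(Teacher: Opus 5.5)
Your proof is correct and complete. Orthogonal complements give (i)$\Leftrightarrow$(ii). The identity $\im\begin{pmatrix} Y & X\end{pmatrix}=\im(Y)+\im(X)$, together with the dimension count for nested subspaces, gives (i)$\Leftrightarrow$(iii). Transposition gives (iii)$\Leftrightarrow$(iv). The paper gives no proof of this lemma; it cites it as a standard linear-algebraic result, so your argument supplies the routine justification it omits.

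One small caveat about how the lemma is used later. In the proof of Theorem~\ref{thm:FC-main} the paper applies it to the matrix $\mathcal P_\lambda$, which is complex when $\lambda$ is. For that use, (ii)$\Leftrightarrow$(iv) is cleanest by rank--nullity. Since $\ker\begin{pmatrix} Y^{\mathsf T}\\ X^{\mathsf T}\end{pmatrix}=\ker(Y^{\mathsf T})\cap\ker(X^{\mathsf T})$, the two ranks agree exactly when this intersection equals $\ker(Y^{\mathsf T})$, and that argument holds over any field. Your inner-product step, by contrast, would need to be read as an annihilator under the bilinear pairing $x^{\mathsf T}y$ rather than the Hermitian inner product.
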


Combining Definition \ref{def1} with Lemma \ref{lem:im-ker-duality} yields the following equivalent formulations of Functional Controllability (FC).

\begin{lemma}[Equivalent FC Formulations]\label{lemma2}
	The following statements are equivalent.
	
	\begin{enumerate}
		\item[(i)] The triple
		\((A,B,F^{\mathsf T})\) is \emph{functional controllable}.		
		\item[(ii)]
$		\im(F^{\mathsf T})
		\subseteq
		\im\mathcal C_{(A,B)}.
$		
		\item[(iii)]
$		\ker\!\left(\mathcal C_{(A,B)}^{\mathsf T}\right)
		\subseteq
		\ker(F).
$		
		\item[(iv)]
$		\ker\!\left(\mathcal C_{(A,B)}^{\mathsf T}\right)
		\subseteq
		\ker\!\left(\mathcal C_{(A,F^{\mathsf T})}^{\mathsf T}\right).
$		
		\item[(v)]
$		\rank
		\begin{pmatrix}
			\mathcal C_{(A,B)} & F^{\mathsf T}
		\end{pmatrix}
		=
		\rank\mathcal C_{(A,B)}.
$		
\item[(vi)]
		For every generalized left eigenvector \(v\) of \(A\), if \(v\) belongs to the
		uncontrollable subspace of the pair \((A,B)\), i.e.,
		$
		v\in
		\ker\!\left(\mathcal C_{(A,B)}^{\mathsf T}\right),
		$
		then \(v\) belongs to the uncontrollable subspace of the pair
		\((A,F^{\mathsf T})\), i.e.,
		$
		v\in
		\ker\!\left(\mathcal C_{(A,F^{\mathsf T})}^{\mathsf T}\right).
		$
	\end{enumerate}
\end{lemma}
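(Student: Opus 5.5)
The plan is to chain the equivalences through Lemma~\ref{lem:im-ker-duality}, keeping (vi) for last because it is the only one needing an eigenstructure argument.

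\textbf{Purely linear-algebraic equivalences.} First, (i)$\Leftrightarrow$(ii) is Definition~\ref{def1}. Next, applying Lemma~\ref{lem:im-ker-duality} with $X=F^{\mathsf T}$ and $Y=\mathcal C_{(A,B)}$ gives (ii)$\Leftrightarrow$(iii) via items i and ii, and (ii)$\Leftrightarrow$(v) via items i and iii.

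For (iii)$\Leftrightarrow$(iv), use that $\ker(\mathcal C_{(A,B)}^{\mathsf T})$ is $A^{\mathsf T}$-invariant. Indeed, if $B^{\mathsf T}(A^{\mathsf T})^k v=0$ for $k=0,\dots,n-1$, the Cayley--Hamilton theorem gives the same for $A^{\mathsf T}v$. Moreover
\[
\ker(\mathcal C_{(A,F^{\mathsf T})}^{\mathsf T})=\{v: F(A^{\mathsf T})^k v=0,\ k=0,\dots,n-1\}.
\]
For (iii)$\Rightarrow$(iv), take $v\in\ker\mathcal C_{(A,B)}^{\mathsf T}$. Then each $(A^{\mathsf T})^kv$ also lies in this subspace by invariance, hence in $\ker F$ by (iii), so $v\in\ker\mathcal C_{(A,F^{\mathsf T})}^{\mathsf T}$. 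For (iv)$\Rightarrow$(iii), use $\ker\mathcal C_{(A,F^{\mathsf T})}^{\mathsf T}\subseteq\ker F$, which is the $k=0$ block.

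\textbf{The eigenvector condition (vi).} The direction (iv)$\Rightarrow$(vi) is immediate by restricting to generalized left eigenvectors. The main step is (vi)$\Rightarrow$(iv), and the obstacle is field and basis bookkeeping. Generalized eigenvectors may be complex, so I would work over $\mathbb C$, using complexified kernels; real inclusion is equivalent to complex inclusion because the matrices are real. The task is then to show that the $A^{\mathsf T}$-invariant subspace $\mathcal U=\ker\mathcal C_{(A,B)}^{\mathsf T}$ is spanned by generalized left eigenvectors of $A$ that lie in $\mathcal U$.

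To do this, restrict $A^{\mathsf T}$ to $\mathcal U$ and take the primary (generalized eigenspace) decomposition of $\mathcal U$. Concretely, $\mathcal U=\bigoplus_\lambda \mathcal U\cap\ker(\lambda I-A^{\mathsf T})^n$, where the projections onto the summands are polynomials in $A^{\mathsf T}$ and therefore preserve $\mathcal U$. Each nonzero vector in a summand is a generalized left eigenvector of some order $k$. By (vi), each such vector lies in $\ker\mathcal C_{(A,F^{\mathsf T})}^{\mathsf T}$. Since that set is a subspace, the sum of these vectors, which is all of $\mathcal U$, lies in it as well, giving (iv).

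The final point to check is that the paper's definition of a generalized eigenvector (a nonzero $v$ of some exact order) covers every nonzero vector of the summand. It does, since every such $v$ has a well-defined minimal $k$ with $(\lambda I-A^{\mathsf T})^kv=0$.
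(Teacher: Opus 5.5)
Your proof is correct and follows the paper's structure: (i)$\Leftrightarrow$(ii) by definition, Lemma~\ref{lem:im-ker-duality} for (ii)$\Leftrightarrow$(iii)$\Leftrightarrow$(v), and (vi)$\Rightarrow$(iv) by writing the $A^{\mathsf T}$-invariant subspace $\ker(\mathcal C_{(A,B)}^{\mathsf T})$ as a span of generalized left eigenvectors that lie in it. There are two differences: you prove (iii)$\Leftrightarrow$(iv) directly from this invariance, whereas the paper cites Theorem~2 of~\cite{ref4n}; and you spell out the complexification and primary-decomposition details (projections that are polynomials in $A^{\mathsf T}$), which the paper compresses into the claim that an invariant subspace admits a basis of generalized eigenvectors.
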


\begin{proof}
	By Definition~\ref{def1}, statements~(i) and~(ii) are equivalent. Applying
	Lemma~\ref{lem:im-ker-duality} with
$
	X=F^{\mathsf T},$ and
$	Y=\mathcal C_{(A,B)},
$
	establishes the equivalence of statements~(ii), (iii), and~(v).
	Moreover, statements~(iii) and~(iv) are equivalent by Theorem~2 of
	\cite{ref4n}.
	It remains to prove that statements~(iv) and~(vi) are equivalent. Define
	\[
	\mathcal N_B
	:=
	\ker\!\left(\mathcal C_{(A,B)}^{\mathsf T}\right),
	\quad
	\mathcal N_F
	:=
	\ker\!\left(\mathcal C_{(A,F^{\mathsf T})}^{\mathsf T}\right).
	\]
	If statement~(iv) holds, then $\mathcal N_B\subseteq\mathcal N_F$, and
	statement~(vi) follows immediately.
	
	Conversely, suppose statement~(vi) holds. Since $\mathcal N_B$ is
	$A^{\mathsf T}$-invariant, it admits a basis consisting of generalized
	left eigenvectors of $A$. Hence every \(x\in\mathcal N_B\) can be expressed as
	$
	x=\sum_j \alpha_j v_j,
	$
	where each \(v_j\in\mathcal N_B\) is a generalized left eigenvector of \(A\).
	By statement~(vi), $v_j\in\mathcal N_F$ for all $j$. Since $\mathcal N_F$
	is a subspace, $x\in\mathcal N_F$. Therefore,
	$
	\mathcal N_B\subseteq\mathcal N_F,
	$
	which is statement~(iv).
	Thus the six statements are equivalent.
\end{proof}

\begin{remark}
	The term $F^{\mathsf T}$ is used instead of $F$ 
	because Functional Controllability compares the invariant subspaces generated by the pairs $(A,B)$ and $(A,F^{\mathsf T})$.
	Thus, the pair
	$(A,F^{\mathsf T})$ plays the same role for the functional as
	$(A,B)$ does for the control input.
\end{remark}

For comparison, we recall Target Output Controllability.	
\begin{definition}[Target Output Controllability \cite{ref5n, ref15n} ]\label{def2}
	The functional $z(t)=Fx(t)$ is target output controllable, or equivalently, the triple $(A,B,F)$ is target output controllable, if for any initial value $z(t_0)$ and any final target value $z(t_1)$, there exists an input $u(t)$ that steers $z(t_0) = F x(t_0)$ to $z(t_1) = F x(t_1)$ in finite time $t_1 > t_0$.
\end{definition}	

\begin{remark}
	Target Output Controllability is a behavioural input--output reachability property characterized by the well-known rank condition \cite{ref5n},
	$
		\rank\!\bigl(F\,\mathcal C_{(A,B)}\bigr)=\rank(F).
$
	This criterion differs from the Functional Controllability criterion in item~(v) of Lemma~\ref{lemma2}. As established in Theorem~4 of \cite{ref4n}, Functional Controllability implies Target Output Controllability, whereas the converse does not hold. Thus, Functional Controllability is a stronger structural property that guarantees the behavioural reachability property of Target Output Controllability by requiring
$
	\im(F^{\mathsf T})\subseteq\im\mathcal C_{(A,B)}.
$\end{remark}

\begin{definition}[Functional Observability (FO)]\label{def3}
	The functional \(z(t)=Fx(t)\) is observable, or equivalently the triple $(A,C,F)$ is \emph{functional observable}, if and only if the unobservable subspace of the pair
	\((A,C)\) is contained in the kernel of \(F\).
\end{definition}

\begin{lemma}[Equivalent FO Formulations]\label{lemma3}
	The following statements are equivalent.
	
	\begin{enumerate}
		\item[(i)] The triple
		\((A,C,F)\) is \emph{functional observable}.		
		
		\item[(ii)]
		$
		\ker\!\left(\mathcal O_{(A,C)}\right)
		\subseteq
		\ker(F).
		$
		
		\item[(iii)]
		$
		\im(F^{\mathsf T})
		\subseteq
		\im\!\left(\mathcal O_{(A,C)}^{\mathsf T}\right).
	$
		
		\item[(iv)]
	$
		\ker\!\left(\mathcal O_{(A,C)}\right)
		\subseteq
		\ker\!\left(\mathcal O_{(A,F)}\right).
	$
		
		\item[(v)]
$
		\rank
		\begin{pmatrix}
			\mathcal O_{(A,C)}\\
			F
		\end{pmatrix}
		=
		\rank\!\left(\mathcal O_{(A,C)}\right).
	$
		
		\item[(vi)]
		For every generalized right eigenvector \(v\) of \(A\), if \(v\) belongs to the
		unobservable subspace of the pair \((A,C)\), i.e.,
		$
		v\in
		\ker\!\left(\mathcal O_{(A,C)}\right),
		$
		then \(v\) belongs to the unobservable subspace of the pair
		\((A,F)\), i.e.,
		$
		v\in
		\ker\!\left(\mathcal O_{(A,F)}\right).
		$
	\end{enumerate}
\end{lemma}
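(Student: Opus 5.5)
The argument mirrors the proof of Lemma~\ref{lemma2}, with the roles of images and kernels dualized. Statements~(i) and~(ii) are equivalent by Definition~\ref{def3}. Applying Lemma~\ref{lem:im-ker-duality} with $X=F^{\mathsf T}$ and $Y=\mathcal O_{(A,C)}^{\mathsf T}$ gives the equivalence of (ii), (iii) and (v). Indeed, $\ker(Y^{\mathsf T})=\ker\mathcal O_{(A,C)}$ and $\ker(X^{\mathsf T})=\ker F$, so item~(ii) of that lemma is exactly statement~(ii) here. Likewise, item~(iv) of that lemma reads
\[
\rank\begin{pmatrix}\mathcal O_{(A,C)}\\ F\end{pmatrix}=\rank\mathcal O_{(A,C)},
\]
which is statement~(v).

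For (ii)$\Leftrightarrow$(iv) I would argue directly rather than invoke \cite{ref4n}.
\begin{itemize}
\item[$\Leftarrow$:] Since the first block row of $\mathcal O_{(A,F)}$ is $F$, we have $\ker\mathcal O_{(A,F)}\subseteq\ker F$. Hence (iv) implies (ii).
\item[$\Rightarrow$:] The subspace $\mathcal N_C:=\ker\mathcal O_{(A,C)}$ is $A$-invariant. So if $x\in\mathcal N_C$, then $A^kx\in\mathcal N_C\subseteq\ker F$ for every $k\ge 0$. Thus $FA^kx=0$ for all $k$, that is, $x\in\ker\mathcal O_{(A,F)}$.
\end{itemize}

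For (iv)$\Leftrightarrow$(vi), set $\mathcal N_F:=\ker\mathcal O_{(A,F)}$. The direction (iv)$\Rightarrow$(vi) is immediate. For the converse, I would use the $A$-invariance of $\mathcal N_C$ to decompose it along the generalized eigenspaces of $A$:
\[
\mathcal N_C=\bigoplus_{\lambda}\bigl(\mathcal N_C\cap\ker(\lambda I-A)^n\bigr).
\]
Each summand is spanned by generalized right eigenvectors of $A$ lying in $\mathcal N_C$. Statement~(vi) places each such vector in $\mathcal N_F$, and since $\mathcal N_F$ is a subspace, all of $\mathcal N_C$ lies in $\mathcal N_F$.

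The only delicate point is the basis claim: an $A$-invariant subspace is spanned by generalized eigenvectors of $A$ that it contains. The standard justification is that the spectral projectors onto the generalized eigenspaces are polynomials in $A$, and therefore map an invariant subspace into itself. When $A$ has complex eigenvalues, the complexification of $\mathcal N_C$ must be used, and the inclusion $\mathcal N_C\subseteq\mathcal N_F$ then transfers back to real vectors because both kernels are defined by real matrices.
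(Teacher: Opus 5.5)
Your proof is correct and takes essentially the same route as the paper, which obtains this lemma by dualizing the proof of Lemma~\ref{lemma2}: Lemma~\ref{lem:im-ker-duality} gives (ii), (iii) and (v), and the generalized-eigenvector spanning argument gives (iv)$\Leftrightarrow$(vi). The only differences are that you prove (ii)$\Leftrightarrow$(iv) directly from the $A$-invariance of $\ker\mathcal O_{(A,C)}$ rather than citing Theorem~2 of \cite{ref4n}, and that you state explicitly that (vi) must be read over complex generalized eigenvectors when $A$ has non-real eigenvalues, a point the paper leaves implicit.
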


\begin{proof}
	The proof follows by duality from Lemma~\ref{lemma2}.
\end{proof}	

The following lemma shows that Functional Observability is equivalent to the
ability to uniquely determine the initial functional value from the measured
input-output data over a finite time interval.

\begin{lemma}[Behavioural Characterization of FO]\label{lem:FO-behavioural}
	The following statements are equivalent:
	\begin{enumerate}
		\item[(i)] There exists a finite time \(T>0\) such that the initial
		functional value \(z(0)=Fx(0)\) can be uniquely determined from the
		output history \(y(t)\), the known input \(u(t)\), and
		\(0\leq t\leq T\) (originally in~\cite{ref3n}).
		
		\item[(ii)]
		$
		\ker\!\left(\mathcal O_{(A,C)}\right)\subseteq\ker(F).
		$
	\end{enumerate}
\end{lemma}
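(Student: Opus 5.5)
The plan is to reduce the problem to the zero-input case and then use the structure of the output map $x(0)\mapsto y(\cdot)$.

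\textbf{Reduction to zero input.} Since the input $u(t)$ is known, the forced response
\[
y_f(t)=\int_0^t Ce^{A(t-s)}Bu(s)\,ds
\]
can be computed and subtracted from $y(t)$. What remains is the free response $y_0(t)=Ce^{At}x(0)$. So statement~(i) is equivalent to the following: there is a finite $T>0$ such that $Fx(0)$ is uniquely determined by the map $x(0)\mapsto y_0|_{[0,T]}$. By linearity, this means that whenever two initial states give the same free output on $[0,T]$, they must give the same functional value. Equivalently,
\[
Ce^{At}\xi=0 \ \text{ for all } t\in[0,T] \;\Longrightarrow\; F\xi=0 .
\]

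\textbf{Identifying the kernel.} Next I would show that
\[
\{\xi : Ce^{At}\xi=0 \text{ on } [0,T]\}=\ker\mathcal O_{(A,C)}
\]
for every $T>0$.

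For the inclusion $\subseteq$, differentiate $Ce^{At}\xi$ repeatedly at $t=0$. This gives $CA^k\xi=0$ for all $k\ge 0$, in particular for $k=0,\dots,n-1$.

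For the inclusion $\supseteq$, take $\xi\in\ker\mathcal O_{(A,C)}$. By Cayley--Hamilton, $CA^k\xi=0$ for every $k\ge 0$. The power series $Ce^{At}\xi=\sum_k t^k CA^k\xi/k!$ then vanishes identically.

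\textbf{Both directions.} With the kernel identified, (i) reads: $\ker\mathcal O_{(A,C)}\subseteq\ker F$, which is (ii).
\begin{itemize}
\item For (ii)$\Rightarrow$(i), take any $T>0$. Two initial states producing the same data differ by some $\xi\in\ker\mathcal O_{(A,C)}\subseteq\ker F$, so they have the same $Fx(0)$.
\item For (i)$\Rightarrow$(ii), argue by contrapositive. If some $\xi\in\ker\mathcal O_{(A,C)}$ has $F\xi\neq 0$, then the initial states $x(0)=0$ and $x(0)=\xi$, under the same input, produce identical outputs on every interval but different values of $z(0)$. So uniqueness fails for every $T$.
\end{itemize}

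\textbf{Main obstacle.} The main point needing care is making the phrase ``uniquely determined'' precise as injectivity of the map from the data $y|_{[0,T]}$ to $Fx(0)$, in the sense that equal data force equal $z(0)$. Once that is fixed, the rest is the standard analyticity and Cayley--Hamilton argument from classical observability. The equivalence can then also be restated through Lemma~\ref{lemma3}.
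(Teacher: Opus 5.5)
Your proof is correct and uses the same argument as the paper in both directions: compare two initial states under the same input, and use that their difference lies in $\ker\mathcal O_{(A,C)}$. The only difference is that you explicitly justify the step the paper takes for granted. That step is that identical outputs on $[0,T]$, for any $T>0$, hold exactly when the difference of the initial states lies in $\ker\mathcal O_{(A,C)}$, which you show by subtracting the forced response, differentiating at $t=0$, and applying Cayley--Hamilton.
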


\begin{proof}
	Assume (i), and let
	\(v\in\ker\!\left(\mathcal O_{(A,C)}\right)\).
	The initial conditions \(x_1(0)\) and \(x_1(0)+v\), under the same
	input, generate identical output histories. Hence, uniqueness of
	\(Fx(0)\) implies
	$
	Fx_1(0)=F\bigl(x_1(0)+v\bigr),
	$
	and therefore \(Fv=0\). Thus,
	$
	\ker\!\left(\mathcal O_{(A,C)}\right)\subseteq\ker(F).
	$
	
	Conversely, assume (ii). Let \(x_1(0)\) and \(x_2(0)\) be two initial
	conditions that generate identical output histories under the same known
	input. Then
$
	x_1(0)-x_2(0)\in
	\ker\!\left(\mathcal O_{(A,C)}\right).
$
	By (ii),
$
	x_1(0)-x_2(0)\in\ker(F),
$
	and therefore
$
	F\bigl(x_1(0)-x_2(0)\bigr)=\mathbf{0},
$
	or equivalently,
$
	Fx_1(0)=Fx_2(0),
$
proving (i).
\end{proof}

\begin{remark}
	Lemma~\ref{lem:FO-behavioural} shows that Functional Observability
	admits equivalent behavioural and structural formulations, unlike
	Functional Controllability, whose behavioural counterpart is the
	weaker Target Output Controllability property.
\end{remark}

The following definitions are obtained by restricting the generalized
eigenvector characterizations in items~(vi) of
Lemmas~\ref{lemma2} and~\ref{lemma3} to eigenvalues satisfying
$\Re(\lambda)\ge0$.

\begin{definition}[Functional Stabilizability] \label{def4}
	The functional \(z(t)=Fx(t)\) is stabilizable, or equivalently the triple $(A,B,F^{\mathsf T})$ is \emph{functional stabilizable}
	if and only if the generalized left eigenvector characterization in
	item~(vi) of Lemma~\ref{lemma2} holds for every generalized left
	eigenvector associated with an eigenvalue satisfying
	\(\Re(\lambda)\ge0\).
\end{definition}

\begin{definition}[Functional Detectability] \label{def5}
	The functional \(z(t)=Fx(t)\) is detectable, or equivalently the triple $(A,C,F)$ is 
	\emph{functional detectable}
	if and only if the generalized right eigenvector characterization in
	item~(vi) of Lemma~\ref{lemma3} holds for every generalized right
	eigenvector associated with an eigenvalue satisfying
	\(\Re(\lambda)\ge0\).
\end{definition}

\begin{remark}
	When \(F=I\), the functional \(z(t)=Fx(t)\) reduces to the state
	\(x(t)\). Consequently, Definitions~\ref{def1},~\ref{def3},
	~\ref{def4} and \ref{def5} reduce respectively to the standard
	geometric definitions of Controllability, 
	Observability, Stabilizability and Detectability.
\end{remark}

\section{Eigenvalue Characterizations of Functional Properties}
The following lemma establishes a key kernel identity used throughout this section. Building on this result, the subsequent theorem and corollaries provide necessary and sufficient PBH-style eigenvalue characterizations of the functional properties introduced in the previous section. Together with Lemmas~\ref{lemma2} and~\ref{lemma3}, they also provide equivalent characterizations based on generalized eigenvectors and generalized eigenspaces.

\begin{lemma}
	\label{lem:Plambda-kernel}
	For each $\lambda\in\sigma(A)$, let $\nu_\lambda$ denote the size of the
	largest Jordan block of $A^{\mathsf T}$ associated with $\lambda$, define
	\[
	\mathcal X_\lambda
	:=
	\ker\!\left((\lambda I-A^{\mathsf T})^{\nu_\lambda}\right),
	\]
	and 
	\[
	\mathcal P_\lambda
	:=
	\begin{pmatrix}
		(\lambda I-A^{\mathsf T})^{\nu_\lambda}\\[2pt]
		B^{\mathsf T}(\lambda I-A^{\mathsf T})^{0}\\
		\vdots\\
		B^{\mathsf T}(\lambda I-A^{\mathsf T})^{\nu_\lambda-1}
	\end{pmatrix}.
	\]
	Then
	\[
	\ker(\mathcal P_\lambda)
	=
	\mathcal X_\lambda
	\cap
	\ker\!\left(\mathcal C_{(A,B)}^{\mathsf T}\right)
	=
	\ker
	\begin{pmatrix}
		(\lambda I-A^{\mathsf T})^{\nu_\lambda}\\
		\mathcal C_{(A,B)}^{\mathsf T}
	\end{pmatrix}.
	\]
\end{lemma}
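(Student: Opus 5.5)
The second equality is immediate: the kernel of a stacked matrix is the intersection of the kernels of its blocks, so $\ker\begin{pmatrix}(\lambda I-A^{\mathsf T})^{\nu_\lambda}\\ \mathcal C_{(A,B)}^{\mathsf T}\end{pmatrix}=\mathcal X_\lambda\cap\ker(\mathcal C_{(A,B)}^{\mathsf T})$. The substance is therefore the first equality, $\ker(\mathcal P_\lambda)=\mathcal X_\lambda\cap\ker(\mathcal C_{(A,B)}^{\mathsf T})$. Write $N:=\lambda I-A^{\mathsf T}$, so that $A^{\mathsf T}=\lambda I-N$. Recall that $\mathcal C_{(A,B)}^{\mathsf T}$ is the stack of the blocks $B^{\mathsf T}(A^{\mathsf T})^k$ for $k=0,\dots,n-1$. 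Also, by Cayley--Hamilton, $x\in\ker(\mathcal C_{(A,B)}^{\mathsf T})$ if and only if $B^{\mathsf T}(A^{\mathsf T})^k x=0$ for all $k\ge0$.

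The key observation is that the polynomial spans $\operatorname{span}\{(A^{\mathsf T})^k: k\ge0\}$ and $\operatorname{span}\{N^k:k\ge0\}$ coincide. Indeed, $N^k$ is a polynomial in $A^{\mathsf T}$, and conversely $(A^{\mathsf T})^k=(\lambda I-N)^k=\sum_{j}\binom{k}{j}\lambda^{k-j}(-1)^jN^j$ by the binomial theorem. Hence $\ker(\mathcal C_{(A,B)}^{\mathsf T})=\{x: B^{\mathsf T}N^jx=0\ \forall j\ge0\}$.

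For the inclusion $\supseteq$, take $x\in\mathcal X_\lambda\cap\ker(\mathcal C_{(A,B)}^{\mathsf T})$. Then $N^{\nu_\lambda}x=0$, and $B^{\mathsf T}N^jx=0$ for every $j$, in particular for $j=0,\dots,\nu_\lambda-1$. Hence $\mathcal P_\lambda x=0$.

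For the inclusion $\subseteq$, take $x\in\ker(\mathcal P_\lambda)$. The first block row gives $N^{\nu_\lambda}x=0$, so $x\in\mathcal X_\lambda$. Moreover $N^jx=0$ for all $j\ge\nu_\lambda$, so $B^{\mathsf T}N^jx=0$ for $j\ge\nu_\lambda$ trivially. For $j<\nu_\lambda$ it holds by the remaining block rows of $\mathcal P_\lambda$. Thus $B^{\mathsf T}N^jx=0$ for all $j\ge0$. Expanding $(A^{\mathsf T})^k$ binomially in powers of $N$ then gives $B^{\mathsf T}(A^{\mathsf T})^kx=0$ for all $k$, i.e. $x\in\ker(\mathcal C_{(A,B)}^{\mathsf T})$.

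The only delicate point, and the step I expect to need the most care, is this truncation: the finitely many block rows of $\mathcal P_\lambda$ must suffice, and they do precisely because of the nilpotency $N^{\nu_\lambda}x=0$ on $\mathcal X_\lambda$. Note that this uses only the defining property of $\mathcal X_\lambda$, not that $\nu_\lambda$ is the largest Jordan block size. That choice merely ensures $\mathcal X_\lambda$ is the full generalized eigenspace, which matters for later results but not for this identity. Since $\lambda$ may be complex, I would work over $\mathbb C^n$ throughout; the argument is purely algebraic and unaffected.
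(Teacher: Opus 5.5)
Your proof is correct and follows essentially the same route as the paper: you rewrite $\ker(\mathcal C_{(A,B)}^{\mathsf T})$ in terms of the rows $B^{\mathsf T}(\lambda I-A^{\mathsf T})^k$ via the binomial theorem, then truncate using $(\lambda I-A^{\mathsf T})^{\nu_\lambda}x=\mathbf 0$ on $\mathcal X_\lambda$, and read off the second equality from the kernel of the stacked matrix. Your version is somewhat more explicit than the paper's terse argument, since you check both inclusions and both directions of the binomial change of basis, and you note that $\lambda$ may be complex.
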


\begin{proof}
	Let
	\(
	N_\lambda:=\lambda I-A^{\mathsf T}.
	\)
	By the binomial theorem,
	\(
	N_\lambda^k
	=
	\sum_{i=0}^{k}
	(-1)^i
	\binom{k}{i}
	\lambda^{k-i}(A^{\mathsf T})^i,
	\)
	so
	\[
	\ker\!\left(\mathcal C_{(A,B)}^{\mathsf T}\right)
	=
	\ker
	\begin{pmatrix}
		B^{\mathsf T}\\
		B^{\mathsf T}N_\lambda\\
		\vdots\\
		B^{\mathsf T}N_\lambda^{\,n-1}
	\end{pmatrix}.
	\]
Since \(N_\lambda^{\nu_\lambda}x=\mathbf{0}\) for every
\(x\in\mathcal X_\lambda\), it follows that
\(N_\lambda^k x=\mathbf{0}\) for all \(k\ge\nu_\lambda\).
Hence, on \(\mathcal X_\lambda\), the rows
\(B^{\mathsf T}N_\lambda^k\), \(k\ge\nu_\lambda\),
vanish, and therefore
\[
\ker(\mathcal P_\lambda)
=
\mathcal X_\lambda
\cap
\ker\!\left(\mathcal C_{(A,B)}^{\mathsf T}\right).
\]
The second equality follows from
\[
\ker
\begin{pmatrix}
	N_\lambda^{\nu_\lambda}\\
	\mathcal C_{(A,B)}^{\mathsf T}
\end{pmatrix}
=
\ker(N_\lambda^{\nu_\lambda})
\cap
\ker(\mathcal C_{(A,B)}^{\mathsf T}),
\]
together with the definition
$
\mathcal X_\lambda
=
\ker(N_\lambda^{\nu_\lambda}).
$
\end{proof}

\begin{thm}[Eigenvalue Characterization of FC]
	\label{thm:FC-main}
	The triple $(A,B,F^{\mathsf T})$ is functional controllable if and only if, for every
	$\lambda\in\sigma(A)$,
	\begin{equation}
		\rank
		\begin{pmatrix}
			\mathcal P_\lambda\\
			F
		\end{pmatrix}
		=
		\rank(\mathcal P_\lambda). \nonumber 
	\end{equation}
\end{thm}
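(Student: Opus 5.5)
We translate the rank condition into a kernel inclusion using Lemma~\ref{lem:im-ker-duality}, and then use the generalized-eigenspace decomposition of $\ker(\mathcal C_{(A,B)}^{\mathsf T})$. Applying item~(iv) of Lemma~\ref{lem:im-ker-duality} with $Y^{\mathsf T}=\mathcal P_\lambda$ and $X^{\mathsf T}=F$ shows that the rank equality in Theorem~\ref{thm:FC-main} holds for a given $\lambda$ if and only if $\ker(\mathcal P_\lambda)\subseteq\ker(F)$. By Lemma~\ref{lem:Plambda-kernel}, this is equivalent to
\[
\mathcal X_\lambda\cap\ker\!\left(\mathcal C_{(A,B)}^{\mathsf T}\right)\subseteq\ker(F).
\]
The theorem therefore reduces to the following claim: $\ker(\mathcal C_{(A,B)}^{\mathsf T})\subseteq\ker(F)$, which is item~(iii) of Lemma~\ref{lemma2}, holds if and only if the displayed inclusion holds for every $\lambda\in\sigma(A)$. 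Since $A$ is real but $\lambda$ may be complex, all kernels are taken over $\mathbb C^n$. This is harmless, because for real matrices the inclusion $\ker(\mathcal C_{(A,B)}^{\mathsf T})\subseteq\ker(F)$ holds over $\mathbb R$ if and only if it holds over $\mathbb C$.

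\emph{Necessity} is immediate. If $\ker(\mathcal C_{(A,B)}^{\mathsf T})\subseteq\ker(F)$, then intersecting the left-hand side with $\mathcal X_\lambda$ preserves the inclusion.

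\emph{Sufficiency} is the main step. Set $\mathcal N_B:=\ker(\mathcal C_{(A,B)}^{\mathsf T})$. This subspace is $A^{\mathsf T}$-invariant by Cayley--Hamilton, since $B^{\mathsf T}(A^{\mathsf T})^n$ is a combination of the rows $B^{\mathsf T}(A^{\mathsf T})^k$ with $k<n$. By the primary decomposition, $\mathbb C^n=\bigoplus_{\lambda\in\sigma(A)}\mathcal X_\lambda$. The key fact is that any $A^{\mathsf T}$-invariant subspace $\mathcal V$ splits along this decomposition:
\[
\mathcal V=\bigoplus_\lambda(\mathcal V\cap\mathcal X_\lambda).
\]
To justify this, note that the spectral projector onto $\mathcal X_\lambda$ along the other generalized eigenspaces is a polynomial $p_\lambda(A^{\mathsf T})$, obtained from Bézout/Lagrange interpolation on the coprime factors of the minimal polynomial. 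Hence for $x\in\mathcal V$ the component $x_\lambda=p_\lambda(A^{\mathsf T})x$ lies in both $\mathcal V$ and $\mathcal X_\lambda$.

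With this splitting, take any $x\in\mathcal N_B$ and write $x=\sum_\lambda x_\lambda$ with $x_\lambda\in\mathcal X_\lambda\cap\mathcal N_B$. By hypothesis each $x_\lambda\in\ker(F)$, so $Fx=0$. Thus $\mathcal N_B\subseteq\ker(F)$, and Lemma~\ref{lemma2} gives functional controllability. This is essentially the argument already sketched in the proof of Lemma~\ref{lemma2}, (vi)$\Rightarrow$(iv), made precise eigenvalue by eigenvalue.

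I expect the only real obstacle to be stating the invariant-subspace splitting cleanly, that is, justifying that the spectral projectors are polynomials in $A^{\mathsf T}$ and hence leave $\mathcal N_B$ invariant. The remaining steps are direct applications of Lemmas~\ref{lem:im-ker-duality}, \ref{lemma2}, and~\ref{lem:Plambda-kernel}.
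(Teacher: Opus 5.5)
Your proof is correct and follows the paper's argument. Both proofs convert the rank condition into $\ker(\mathcal P_\lambda)\subseteq\ker(F)$ via Lemma~\ref{lem:im-ker-duality}, identify $\ker(\mathcal P_\lambda)$ with $\mathcal X_\lambda\cap\ker(\mathcal C_{(A,B)}^{\mathsf T})$ via Lemma~\ref{lem:Plambda-kernel}, and conclude from the primary decomposition of the $A^{\mathsf T}$-invariant subspace $\ker(\mathcal C_{(A,B)}^{\mathsf T})$; you additionally justify that splitting (spectral projectors are polynomials in $A^{\mathsf T}$) and the passage to $\mathbb C^n$, both of which the paper leaves implicit.
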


\begin{proof}
	Define
	$$
	\mathcal U_{\mathrm{uc},\lambda} := \ker(\mathcal P_\lambda)
	$$
	i.e., the uncontrollable component of the generalized eigenspace associated with $\lambda$.
	By Lemma~\ref{lem:Plambda-kernel},
	\[
	\ker(\mathcal P_\lambda)
	=
	\mathcal X_\lambda
	\cap
	\ker\!\left(\mathcal C_{(A,B)}^{\mathsf T}\right)
	=
	\mathcal U_{\mathrm{uc},\lambda}.
	\]
	Since the uncontrollable subspace is
	\(A^{\mathsf T}\)-invariant, it admits the primary decomposition
	\[
	\ker\!\left(\mathcal C_{(A,B)}^{\mathsf T}\right)
	=
	\bigoplus_{\lambda\in\sigma(A)}
	\mathcal U_{\mathrm{uc},\lambda}.
	\]
	Therefore, the Functional Controllability condition
	\[
	\ker\!\left(\mathcal C_{(A,B)}^{\mathsf T}\right)
	\subseteq
	\ker(F)
	\]
	holds if and only if
	\[
	\ker(\mathcal P_\lambda)
	\subseteq
	\ker(F),
	\quad
	\forall\,\lambda\in\sigma(A).
	\]
	Thus, Functional Controllability is equivalent to requiring that
	\(F\) annihilates the uncontrollable component of every generalized
	eigenspace. 
	Finally, by Lemma~\ref{lem:im-ker-duality},
	$
	\ker(\mathcal P_\lambda)
	\subseteq
	\ker(F)
	$
	is equivalent to
	\[
	\rank
	\begin{pmatrix}
		\mathcal P_\lambda\\
		F
	\end{pmatrix}
	=
	\rank(\mathcal P_\lambda),
	\]
	for every \(\lambda\in\sigma(A)\), proving the result.
\end{proof}

\begin{remark}
	Theorem~\ref{thm:FC-main} shows that Functional Controllability is
	equivalent to requiring $F$ to annihilate the uncontrollable component
	of every generalized eigenspace. Equivalently,
	\[
	\rank
	\begin{pmatrix}
		(\lambda I-A^{\mathsf T})^{\nu_\lambda}\\
		\mathcal C_{(A,B)}^{\mathsf T}\\
		F
	\end{pmatrix}
	=
	\rank
	\begin{pmatrix}
		(\lambda I-A^{\mathsf T})^{\nu_\lambda}\\
		\mathcal C_{(A,B)}^{\mathsf T}
	\end{pmatrix},
	\,\,
	\forall\,\lambda\in\sigma(A).
	\]
\end{remark}

\begin{corol}[Eigenvalue Characterization of FS]
	\label{cor:FS-main}
	Let \(z(t)=Fx(t)\), and let \(\mathcal P_\lambda\) be defined as in
	Theorem~\ref{thm:FC-main}. Then \(z(t)=Fx(t)\) is functional
	stabilizable if and only if, for every
	\(\lambda\in\sigma(A)\) satisfying \(\Re(\lambda)\geq0\),
	\begin{equation}
		\rank
		\begin{pmatrix}
			\mathcal P_\lambda\\
			F
		\end{pmatrix}
		=
		\rank(\mathcal P_\lambda). \nonumber 
	\end{equation}
\end{corol}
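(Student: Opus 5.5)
The plan is to run the argument of Theorem~\ref{thm:FC-main}, restricted to the eigenvalues with $\Re(\lambda)\ge 0$. By Definition~\ref{def4}, functional stabilizability means the following. For every generalized left eigenvector $v$ of $A$ whose eigenvalue satisfies $\Re(\lambda)\ge0$, if $v\in\ker(\mathcal C_{(A,B)}^{\mathsf T})$ then $v\in\ker(\mathcal C_{(A,F^{\mathsf T})}^{\mathsf T})$. The first step is to recast this eigenvalue by eigenvalue as a subspace inclusion. For a fixed $\lambda$, the generalized left eigenvectors associated with $\lambda$ are exactly the nonzero elements of $\mathcal X_\lambda=\ker((\lambda I-A^{\mathsf T})^{\nu_\lambda})$. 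The vectors in $\mathcal X_\lambda\cap\ker(\mathcal C_{(A,B)}^{\mathsf T})$ are therefore exactly the vectors the definition constrains, together with $\mathbf 0$. By Lemma~\ref{lem:Plambda-kernel}, this intersection equals $\ker(\mathcal P_\lambda)=\mathcal U_{\mathrm{uc},\lambda}$. So the condition for a given $\lambda$ reads
\[
\ker(\mathcal P_\lambda)\subseteq \ker\!\left(\mathcal C_{(A,F^{\mathsf T})}^{\mathsf T}\right).
\]
I will use the fact that a nonzero element of a generalized eigenspace is itself a generalized eigenvector of some order $k\le\nu_\lambda$.

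The second step replaces $\ker(\mathcal C_{(A,F^{\mathsf T})}^{\mathsf T})$ by $\ker(F)$, exactly as the equivalence (iii)$\Leftrightarrow$(iv) in Lemma~\ref{lemma2} does globally. One direction is immediate: $\ker(\mathcal C_{(A,F^{\mathsf T})}^{\mathsf T})\subseteq\ker(F)$, because $F$ is the first block row of $\mathcal C_{(A,F^{\mathsf T})}^{\mathsf T}$. For the converse I will use invariance. The space $\ker(\mathcal P_\lambda)$ is $A^{\mathsf T}$-invariant, being an intersection of two $A^{\mathsf T}$-invariant subspaces. Hence if $F$ annihilates $\ker(\mathcal P_\lambda)$, then for $x\in\ker(\mathcal P_\lambda)$ we get $F(A^{\mathsf T})^k x=0$ for all $k$, since $(A^{\mathsf T})^kx\in\ker(\mathcal P_\lambda)$. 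This gives $x\in\ker(\mathcal C_{(A,F^{\mathsf T})}^{\mathsf T})$. Thus the per-$\lambda$ condition is equivalent to $\ker(\mathcal P_\lambda)\subseteq\ker(F)$.

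The final step applies Lemma~\ref{lem:im-ker-duality}. Take $Y=\mathcal P_\lambda^{\mathsf T}$ and $X=F^{\mathsf T}$; items (ii) and (iv) of that lemma give
\[
\ker(\mathcal P_\lambda)\subseteq\ker(F)\iff \rank\begin{pmatrix}\mathcal P_\lambda\\ F\end{pmatrix}=\rank(\mathcal P_\lambda).
\]
Ranging over all $\lambda\in\sigma(A)$ with $\Re(\lambda)\ge0$ yields the corollary.

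I expect the main obstacle to be the first step, because of field issues. The eigenvalues may be complex, so $\mathcal X_\lambda$ and $\mathcal P_\lambda$ live over $\mathbb C$. I will therefore treat the definition's generalized eigenvectors as complex vectors and interpret all kernels over $\mathbb C$. The matrices $\mathcal C_{(A,B)}$ and $F$ are real, so their complex kernels are just the complexifications of the real ones, and the equivalences above are unaffected. Beyond this point, the argument is a verbatim restriction of the proof of Theorem~\ref{thm:FC-main} to the unstable and marginal part of the spectrum. No primary decomposition over the stable eigenvalues is needed, since the definition itself is already stated eigenvalue by eigenvalue.
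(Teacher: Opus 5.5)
Your proposal is correct and follows the paper's route. The paper's one-line proof restricts the eigenvalue-wise argument of Theorem~\ref{thm:FC-main} (the kernel identity of Lemma~\ref{lem:Plambda-kernel}, then Lemma~\ref{lem:im-ker-duality}) to $\Re(\lambda)\ge0$. Your per-eigenvalue $A^{\mathsf T}$-invariance argument, which replaces $\ker(\mathcal C_{(A,F^{\mathsf T})}^{\mathsf T})$ by $\ker(F)$ on $\ker(\mathcal P_\lambda)$, makes explicit the step that proof leaves implicit; its global analogue is item (iii)$\Leftrightarrow$(iv) of Lemma~\ref{lemma2}.
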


\begin{proof}
	The result follows directly from Definition~\ref{def4} and
	Theorem~\ref{thm:FC-main} by restricting the eigenvalue-wise
	condition to \(\lambda\in\sigma(A)\) satisfying
	\(\Re(\lambda)\geq0\).
\end{proof}

\begin{corol}[Eigenvalue Characterization of FO]
	\label{cor:FO-main}
	Let \(z(t)=Fx(t)\). For each \(\lambda\in\sigma(A)\), let
	\(\nu_\lambda\) denote the size of the largest Jordan block of
	\(A\) associated with \(\lambda\), and define
	\[
	\mathcal Q_\lambda
	:=
	\begin{pmatrix}
		(\lambda I-A)^{\nu_\lambda}\\[2pt]
		C(\lambda I-A)^0\\
		\vdots\\
		C(\lambda I-A)^{\nu_\lambda-1}
	\end{pmatrix}.
	\]
	Then the functional \(z(t)=Fx(t)\) is observable, or equivalently, the triple $(A,C,F)$ is \emph{functional observable} 
	if and only if, for every \(\lambda\in\sigma(A)\),
	\begin{equation}
		\rank
		\begin{pmatrix}
			\mathcal Q_\lambda\\
			F
		\end{pmatrix}
		=
		\rank(\mathcal Q_\lambda). \nonumber 
	\end{equation}
\end{corol}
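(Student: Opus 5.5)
The plan is to mirror the proof of Theorem~\ref{thm:FC-main}, working directly with $A$ instead of $A^{\mathsf T}$ and with $C$ in place of $B^{\mathsf T}$. This route uses the same primary-decomposition argument on the right generalized eigenspaces rather than a formal transposition of the FC result.

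\textbf{Step 1: kernel identity.} For each $\lambda\in\sigma(A)$ set $M_\lambda:=\lambda I-A$ and $\mathcal Y_\lambda:=\ker(M_\lambda^{\nu_\lambda})$, the generalized right eigenspace of $A$ at $\lambda$. Each power $M_\lambda^k$ is a polynomial in $A$ by the binomial theorem, and conversely $A=\lambda I-M_\lambda$. Hence the row spaces of $\{CA^k\}_{k=0}^{n-1}$ and $\{CM_\lambda^k\}_{k=0}^{n-1}$ coincide, and $\ker\mathcal O_{(A,C)}=\ker\,[C;CM_\lambda;\dots;CM_\lambda^{n-1}]$. On $\mathcal Y_\lambda$ every row $CM_\lambda^k$ with $k\ge\nu_\lambda$ vanishes. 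This gives the analogue of Lemma~\ref{lem:Plambda-kernel}:
\[
\ker(\mathcal Q_\lambda)=\mathcal Y_\lambda\cap\ker\mathcal O_{(A,C)} .
\]
This step could instead be obtained by applying Lemma~\ref{lem:Plambda-kernel} to the pair $(A^{\mathsf T},C^{\mathsf T})$. The Jordan-block sizes of $A$ and $A^{\mathsf T}$ agree, so $\nu_\lambda$ is consistent either way.

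\textbf{Step 2: primary decomposition.} The unobservable subspace $\ker\mathcal O_{(A,C)}$ is $A$-invariant. An $A$-invariant subspace $\mathcal V$ splits as $\mathcal V=\bigoplus_{\lambda}(\mathcal V\cap\mathcal Y_\lambda)$, because the spectral projections onto the $\mathcal Y_\lambda$ are polynomials in $A$ and therefore map $\mathcal V$ into itself. Combined with Step~1, this gives
\[
\ker\mathcal O_{(A,C)}=\bigoplus_{\lambda\in\sigma(A)}\ker(\mathcal Q_\lambda).
\]
Over complex $\lambda$, this decomposition is taken in $\mathbb C^n$. Inclusion in $\ker F$ for real $F$ is unaffected by complexification, since $\ker\mathcal O_{(A,C)}$ is a real subspace whose complexification is the direct sum above.

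\textbf{Step 3: conclusion.} By Lemma~\ref{lemma3}(ii), functional observability is equivalent to $\ker\mathcal O_{(A,C)}\subseteq\ker F$. Because of the direct sum in Step~2, this holds if and only if $\ker(\mathcal Q_\lambda)\subseteq\ker F$ for every $\lambda\in\sigma(A)$. By Lemma~\ref{lem:im-ker-duality}, applied with $Y=\mathcal Q_\lambda^{\mathsf T}$ and $X=F^{\mathsf T}$ and using items (ii) and (iv), each such inclusion is equivalent to the rank equality in the statement.

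The only point requiring care is the decomposition in Step~2, namely justifying that the invariant subspace really splits along the generalized eigenspaces. The spectral-projection argument settles this. Everything else is a routine transcription of the FC case.
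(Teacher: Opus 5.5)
Your proof is correct and is essentially the paper's argument. The paper applies Theorem~\ref{thm:FC-main} to the dual pair $(A^{\mathsf T},C^{\mathsf T})$, under which $\mathcal P_\lambda$ becomes $\mathcal Q_\lambda$, and your Steps 1--3 (kernel identity, primary decomposition of the unobservable subspace, rank duality via Lemma~\ref{lem:im-ker-duality}) are that same proof written out for $A$ and $C$, with your spectral-projection and complexification remarks only filling in details the paper leaves implicit.
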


\begin{proof}
	The result follows directly from Theorem~\ref{thm:FC-main} applied to
	the dual pair \((A^{\mathsf T},C^{\mathsf T})\). Under this duality,
	the matrix \(\mathcal P_\lambda\) becomes \(\mathcal Q_\lambda\),
	yielding the stated rank condition.
\end{proof}

\begin{remark}
	Corollary~\ref{cor:FO-main} is the dual counterpart of
	Theorem~\ref{thm:FC-main}. The PBH-style characterisation of Corollary~\ref{cor:FO-main} generalizes the result of~\cite{ref8n} by providing a necessary and sufficient characterization for arbitrary finite-dimensional linear systems, without requiring the additional structural assumptions imposed in~\cite{ref8n}, such as assumptions on the Jordan structure or additional conditions on a transformed functional matrix.
\end{remark}

\begin{remark}
	By Lemma~\ref{lem:Plambda-kernel} applied to the dual pair
	\((A^{\mathsf T},C^{\mathsf T})\),
	\[
	\ker(\mathcal Q_\lambda)
	=
	\ker
	\begin{pmatrix}
		(\lambda I-A)^{\nu_\lambda}\\
		\mathcal O_{(A,C)}
	\end{pmatrix}.
	\]
	Equivalently, the triple $(A,C,F)$ is \emph{functional observable} if and only if
	\[
	\rank
	\begin{pmatrix}
		(\lambda I-A)^{\nu_\lambda}\\
		\mathcal O_{(A,C)}\\
		F
	\end{pmatrix}
	=
	\rank
	\begin{pmatrix}
		(\lambda I-A)^{\nu_\lambda}\\
		\mathcal O_{(A,C)}
	\end{pmatrix},
	\,\,
	\forall\,\lambda\in\sigma(A).
	\]
\end{remark}

\begin{corol}[Eigenvalue Characterization of FD]
	\label{cor:FD-main}
	The triple $(A,C,F)$ is \emph{functional detectable}  if and only if, for every
	\(\lambda\in\sigma(A)\) satisfying \(\Re(\lambda)\geq0\),
	\begin{equation}
		\rank
		\begin{pmatrix}
			\mathcal Q_\lambda\\
			F
		\end{pmatrix}
		=
		\rank(\mathcal Q_\lambda). \nonumber 
	\end{equation}
\end{corol}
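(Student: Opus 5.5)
The plan is to mirror the proof of Corollary~\ref{cor:FS-main}, now in the observability setting, using duality and the eigenvalue-wise splitting already established in the proof of Theorem~\ref{thm:FC-main}. By Definition~\ref{def5}, functional detectability means: every generalized right eigenvector $v$ of $A$ with eigenvalue $\Re(\lambda)\ge0$ and $v\in\ker(\mathcal O_{(A,C)})$ also lies in $\ker(\mathcal O_{(A,F)})$. I will show that this holds for a fixed $\lambda$ if and only if $\ker(\mathcal Q_\lambda)\subseteq\ker(F)$. Lemma~\ref{lem:im-ker-duality} then converts this inclusion into the stated rank equality, applied with $Y=\mathcal Q_\lambda^{\mathsf T}$ and $X=F^{\mathsf T}$.

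\textbf{Step 1: identify $\ker(\mathcal Q_\lambda)$.} Apply Lemma~\ref{lem:Plambda-kernel} to the dual pair $(A^{\mathsf T},C^{\mathsf T})$, as in the remark following Corollary~\ref{cor:FO-main}. This gives
\[
\ker(\mathcal Q_\lambda)=\ker\big((\lambda I-A)^{\nu_\lambda}\big)\cap\ker(\mathcal O_{(A,C)}).
\]
That is, $\ker(\mathcal Q_\lambda)$ is exactly the set of vectors that lie both in the generalized eigenspace of $\lambda$ and in the unobservable subspace.

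\textbf{Step 2: forward direction.} Assume the eigenvector condition of Definition~\ref{def5} holds at $\lambda$. The subspace $\ker(\mathcal Q_\lambda)$ is $A$-invariant, since both of the spaces intersected in Step 1 are $A$-invariant. It is also contained in the generalized eigenspace of $\lambda$. Hence it has a basis of generalized right eigenvectors for $\lambda$, each lying in $\ker(\mathcal O_{(A,C)})$. By hypothesis each of these basis vectors lies in $\ker(\mathcal O_{(A,F)})\subseteq\ker(F)$. By linearity, $\ker(\mathcal Q_\lambda)\subseteq\ker(F)$.

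\textbf{Step 3: converse.} Assume $\ker(\mathcal Q_\lambda)\subseteq\ker(F)$, and let $v\in\ker(\mathcal Q_\lambda)$. By invariance, $A^kv\in\ker(\mathcal Q_\lambda)\subseteq\ker(F)$ for every $k$, so $FA^kv=0$ for all $k$ and therefore $v\in\ker(\mathcal O_{(A,F)})$. Now take any generalized right eigenvector $v$ for $\lambda$ that lies in $\ker(\mathcal O_{(A,C)})$. By Step 1 such a $v$ belongs to $\ker(\mathcal Q_\lambda)$, so the eigenvector condition of Definition~\ref{def5} holds at $\lambda$.

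Restricting both directions to $\Re(\lambda)\ge0$ gives the corollary.

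\textbf{Main obstacle: complex eigenvalues.} For nonreal $\lambda$, the generalized eigenvectors and $\mathcal Q_\lambda$ are complex. The arguments above must therefore be carried out over $\mathbb C$: take complexified kernels, and note that the rank equality over $\mathbb C$ is equivalent to the kernel inclusion over $\mathbb C$. This is consistent with how Theorem~\ref{thm:FC-main} and its primary decomposition are used, so no new difficulty arises beyond applying Lemma~\ref{lem:im-ker-duality} over the complex field.
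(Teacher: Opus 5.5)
Your proof is correct and follows the paper's route: the paper proves this corollary by restricting the eigenvalue-wise characterization of Corollary~\ref{cor:FO-main}, which rests on the dual kernel identity of Lemma~\ref{lem:Plambda-kernel}, to eigenvalues with $\Re(\lambda)\ge0$. Your Steps~2--3 spell out, for each fixed $\lambda$, the equivalence between the condition of Definition~\ref{def5} and $\ker(\mathcal Q_\lambda)\subseteq\ker(F)$, using the $A$-invariance of $\ker(\mathcal Q_\lambda)$; the paper's one-line restriction relies on this equivalence without stating it.
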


\begin{proof}
	The result follows directly from Definition~\ref{def5} and
	Corollary~\ref{cor:FO-main} by restricting the eigenvalue-wise
	condition to \(\lambda\in\sigma(A)\) satisfying
	\(\Re(\lambda)\geq0\).
\end{proof}

\begin{remark}
	When $F=I$, Theorem~\ref{thm:FC-main} and Corollaries~\ref{cor:FS-main},
	\ref{cor:FO-main}, and~\ref{cor:FD-main} reduce to generalized-eigenspace
	formulations of the classical PBH tests. Indeed, since
	$\ker(I)=\{{\bf0}\}$, the proposed conditions require
	$
	\mathcal U_{\mathrm{uc},\lambda}=\{{\bf0}\},$ and
	$\mathcal U_{\mathrm{uo},\lambda}=\{{\bf0}\},
	$
	for every relevant eigenvalue $\lambda$, where
	$
	\mathcal U_{\mathrm{uc},\lambda}
	=
	\mathcal X_\lambda
	\cap
	\ker\!\left(\mathcal C_{(A,B)}^{\mathsf T}\right),
	$
	and
	$
	\mathcal U_{\mathrm{uo},\lambda}
	=
	\ker \left((\lambda I-A)^{\nu_\lambda}\right)
	\cap
	\ker\!\left(\mathcal O_{(A,C)}\right).
	$
	These conditions are equivalent to the classical PBH rank tests, since a
	generalized eigenspace contains an uncontrollable (respectively,
	unobservable) generalized eigenvector if and only if it contains an
	uncontrollable (respectively, unobservable) eigenvector.
\end{remark}

The classical PBH theorem determines whether each eigenspace is completely controllable or observable. The present paper generalizes this principle by characterizing, for each generalized eigenspace, precisely which uncontrollable or unobservable components must be annihilated by a prescribed functional.

\section{Eigenstructure Characterization of Target Output Controllability}
The eigenvalue characterizations developed in the previous section rely
on the decomposition of the uncontrollable subspace into generalized
eigenspaces. In contrast, Target Output Controllability does not, in
general, admit an independent eigenvalue-wise characterization because
it depends on the intersection of
$\im(F^{\mathsf T})$ with the entire uncontrollable
subspace. The following development therefore establishes an equivalent
eigenstructure characterization based on the complete uncontrollable
subspace.

For each $\lambda\in\sigma(A)$, let
$V_{\mathrm{uc},\lambda}$ be a full-column-rank matrix whose columns
form a basis of $\ker(\mathcal P_\lambda)$. Let
$\lambda_1,\ldots,\lambda_s$ denote the distinct eigenvalues of $A$,
and define
\[
V_{\mathrm{uc}}
:=
\begin{pmatrix}
	V_{\mathrm{uc},\lambda_1}
	&
	\cdots
	&
	V_{\mathrm{uc},\lambda_s}
\end{pmatrix}.
\]
By Theorem~\ref{thm:FC-main},
\[
\ker(\mathcal P_\lambda)
=
\ker\!\left(\mathcal C_{(A,B)}^{\mathsf T}\right)
\cap
\ker\!\left(
(\lambda I-A^{\mathsf T})^{\nu_\lambda}
\right),
\]
and hence, by the primary decomposition of the uncontrollable
subspace,
\[
\im(V_{\mathrm{uc}})
=
\ker\!\left(\mathcal C_{(A,B)}^{\mathsf T}\right).
\]
Let
\[
d_{\mathrm{uc}}
:=
\rank(V_{\mathrm{uc}})
=
\dim\ker\!\left(\mathcal C_{(A,B)}^{\mathsf T}\right).
\]

\begin{thm}[Eigenstructure Characterization of TOC]
	\label{thm:TOC-main}
	Assume that $F\in\mathbb R^{r\times n}$ has full row rank.
	The following statements are equivalent.
	
	\begin{enumerate}
		\item[(i)] The triple $(A,B,F)$ is target output controllable.
		
		\item[(ii)]
		$
		\rank(F\mathcal C_{(A,B)})=\rank(F)=r.
		$
		
		\item[(iii)]
		$
		\ker\!\left(
		\mathcal C_{(A,B)}^{\mathsf T}F^{\mathsf T}
		\right)
		=
		\{\mathbf{0}\}.
		$
		
		\item[(iv)] There exists no nonzero vector
		$\alpha\in\mathbb R^r$ such that
		$
		F^{\mathsf T}\alpha
		\in
		\ker\!\left(
		\mathcal C_{(A,B)}^{\mathsf T}
		\right).
		$
		
		\item[(v)]
	$
		\rank
		\begin{pmatrix}
			F^{\mathsf T} & V_{\mathrm{uc}}
		\end{pmatrix}
		=
		r+d_{\mathrm{uc}}.
	$
	\end{enumerate}
\end{thm}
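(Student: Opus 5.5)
I will prove the chain (i)$\Leftrightarrow$(ii)$\Leftrightarrow$(iii)$\Leftrightarrow$(iv)$\Leftrightarrow$(v), treating (i)$\Leftrightarrow$(ii) as known. The equivalence of target output controllability with the rank condition $\rank(F\mathcal C_{(A,B)})=\rank(F)$ is the classical criterion recalled from \cite{ref5n} in the remark following Definition~\ref{def2}. Since $F$ has full row rank, $\rank(F)=r$, so (i)$\Leftrightarrow$(ii) needs no further argument.

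For (ii)$\Leftrightarrow$(iii), the matrix $F\mathcal C_{(A,B)}$ has $r$ rows, so its rank equals $r$ if and only if its rows are independent. Equivalently, $\ker\bigl((F\mathcal C_{(A,B)})^{\mathsf T}\bigr)=\ker(\mathcal C_{(A,B)}^{\mathsf T}F^{\mathsf T})=\{\mathbf 0\}$, by rank--nullity applied to the $r$-column matrix $\mathcal C_{(A,B)}^{\mathsf T}F^{\mathsf T}$.

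For (iii)$\Leftrightarrow$(iv), unfold the definition of the kernel. A nonzero $\alpha$ lies in $\ker(\mathcal C_{(A,B)}^{\mathsf T}F^{\mathsf T})$ exactly when $F^{\mathsf T}\alpha\in\ker(\mathcal C_{(A,B)}^{\mathsf T})$, so the two statements are literally the same.

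The main step is (iv)$\Leftrightarrow$(v). Here I use the identity $\im(V_{\mathrm{uc}})=\ker(\mathcal C_{(A,B)}^{\mathsf T})$ established just before the theorem via Lemma~\ref{lem:Plambda-kernel} and the primary decomposition. I also need $V_{\mathrm{uc}}$ to have full column rank $d_{\mathrm{uc}}$. This holds because each block $V_{\mathrm{uc},\lambda}$ has full column rank and the generalized eigenspaces $\mathcal X_\lambda$ form a direct sum, so the concatenated columns are independent. Then
\[
\rank\begin{pmatrix}F^{\mathsf T} & V_{\mathrm{uc}}\end{pmatrix}=\dim\bigl(\im F^{\mathsf T}+\ker\mathcal C_{(A,B)}^{\mathsf T}\bigr)=r+d_{\mathrm{uc}}-\dim\bigl(\im F^{\mathsf T}\cap\ker\mathcal C_{(A,B)}^{\mathsf T}\bigr),
\]
where I use $\dim\im F^{\mathsf T}=r$ by full row rank of $F$. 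Hence (v) holds if and only if $\im F^{\mathsf T}\cap\ker\mathcal C_{(A,B)}^{\mathsf T}=\{\mathbf 0\}$. Since $F^{\mathsf T}$ is injective, $F^{\mathsf T}\alpha=\mathbf 0$ forces $\alpha=\mathbf 0$, so this trivial intersection is exactly (iv).

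The only delicate point is the full-column-rank claim for $V_{\mathrm{uc}}$, which relies on the directness of the primary decomposition over possibly complex eigenvalues. If complex $\lambda$ raise realness concerns, I would note that the rank argument works verbatim over $\mathbb C$. I would also note that (iv) over $\mathbb R$ and over $\mathbb C$ coincide: $F$ and $\mathcal C_{(A,B)}$ are real, so the kernel in (iii) has the same dimension over either field.
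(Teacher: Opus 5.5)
Your proposal is correct and follows the paper's proof essentially step for step: the same chain (i)$\Leftrightarrow$(ii)$\Leftrightarrow$(iii)$\Leftrightarrow$(iv)$\Leftrightarrow$(v). In particular, (iv)$\Leftrightarrow$(v) rests, as in the paper, on $\im(V_{\mathrm{uc}})=\ker(\mathcal C_{(A,B)}^{\mathsf T})$ and the injectivity of $F^{\mathsf T}$. Your explicit dimension formula and your remarks on complex eigenvalues only spell out what the paper leaves implicit. The full-column-rank claim for $V_{\mathrm{uc}}$ is true but not actually needed, since $d_{\mathrm{uc}}$ is defined as $\rank(V_{\mathrm{uc}})$.
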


\begin{proof}
	The equivalence of (i) and (ii) is the standard rank
	characterization of target output controllability;
	see~\cite{ref5n}. Since $F$ has full row rank,
	$\rank(F)=r$.
	
	Condition (ii) holds if and only if
	$F\mathcal C_{(A,B)}$ has full row rank, or equivalently,
	\[
	\ker\!\left(
	(F\mathcal C_{(A,B)})^{\mathsf T}
	\right)
	=
	\{\mathbf{0}\}.
	\]
	Since
	\[
	(F\mathcal C_{(A,B)})^{\mathsf T}
	=
	\mathcal C_{(A,B)}^{\mathsf T}F^{\mathsf T},
	\]
	conditions (ii) and (iii) are equivalent.
	
	Moreover,
	\[
	\ker\!\left(
	\mathcal C_{(A,B)}^{\mathsf T}F^{\mathsf T}
	\right)
	=
	\left\{
	\alpha\in\mathbb R^r:
	F^{\mathsf T}\alpha
	\in
	\ker\!\left(
	\mathcal C_{(A,B)}^{\mathsf T}
	\right)
	\right\}.
	\]
	Hence, (iii) holds if and only if there exists no nonzero
	$\alpha\in\mathbb R^r$ such that
$
	F^{\mathsf T}\alpha
	\in
	\ker\!\left(
	\mathcal C_{(A,B)}^{\mathsf T}
	\right),
$
	which proves the equivalence of (iii) and (iv).
	
	Since $F$ has full row rank, $F^{\mathsf T}$ has full column rank.
	Thus, $F^{\mathsf T}\alpha\neq\mathbf{0}$ for every
	$\alpha\neq\mathbf{0}$, and condition (iv) is equivalent to
	\[
	\im(F^{\mathsf T})
	\cap
	\ker\!\left(
	\mathcal C_{(A,B)}^{\mathsf T}
	\right)
	=
	\{\mathbf{0}\}.
	\]
	By the construction of $V_{\mathrm{uc}}$,
	\[
	\im(V_{\mathrm{uc}})
	=
	\ker\!\left(
	\mathcal C_{(A,B)}^{\mathsf T}
	\right).
	\]
	Therefore, condition (iv) is equivalent to
	\[
	\im(F^{\mathsf T})
	\cap
	\im(V_{\mathrm{uc}})
	=
	\{\mathbf{0}\}.
	\]
	Since
$
	\rank(F^{\mathsf T})=r,$ and 
	$
	\rank(V_{\mathrm{uc}})=d_{\mathrm{uc}},
$
	the two subspaces have trivial intersection if and only if
	\[
	\rank
	\begin{pmatrix}
		F^{\mathsf T} & V_{\mathrm{uc}}
	\end{pmatrix}
	=
	r+d_{\mathrm{uc}}.
	\]
	This establishes (iv)$\Leftrightarrow$(v), completing the proof.
\end{proof}

\begin{remark}
	The general characterization necessarily involves the complete uncontrollable
	subspace represented by $V_{\mathrm{uc}}$, rather than its intersection with
	individual eigenspaces of $A^{\mathsf T}$ as in the case-conditioned eigenspace
	test of \cite{ref8n}. In particular, the characterization above requires neither
	a case distinction on the controllability of $(A,B)$ nor any additional
	structural condition beyond the rank test itself.
\end{remark}

\section{Intrinsic Functional Properties}

Intrinsic Functional Controllability characterizes functionals whose
dynamics can be embedded in a controllable intrinsic functional
realization. An \emph{intrinsic functional realization} is a functional
augmentation whose dynamics depend only on the augmented functional and
the input, and are therefore independent of the remaining state
components. Such realizations form the basis for functional controller
synthesis. We first define an intrinsic functional realization and then
characterize Intrinsic Functional Controllability in terms of the
existence of an admissible augmentation.

\begin{definition}
	An \emph{intrinsic functional realization} of \(z(t)=Fx(t)\) is a functional
	augmentation \(\bar z(t)=\bar F x(t)\), with \(\bar F\) formed by appending
	rows to \(F\) so that \(z(t)\) is a sub-vector of \(\bar z(t)\), whose
	dynamics are closed under \(\bar z(t)\) and the input \(u(t)\), independent
	of the remaining state components; that is,
	\[
	\dot{\bar z}=A_{\bar z}\bar z+B_{\bar z}u,
	\]
	for some matrices \(A_{\bar z}\) and \(B_{\bar z}\).
\end{definition}

\begin{definition}[Intrinsic Functional Controllability]
	\label{def:IFC}
	The functional \(z(t)\) is intrinsically controllable, or
	equivalently the triple \((A,B,F^{\mathsf T})\) is \emph{intrinsically
		functional controllable}, if there exists an intrinsic functional
	realization \(\bar z(t)\), with \(z(t)\) a sub-vector of \(\bar z(t)\),
	that can be driven to the origin with an arbitrarily assignable rate of
	convergence.
\end{definition}

The following theorem characterizes Intrinsic Functional
Controllability in terms of the existence of an admissible augmentation
\(\bar F=(\,F^{\mathsf T}\;R_1^{\mathsf T}\,)^{\mathsf T}\).

\begin{thm}[Augmentation Characterization of IFC]
	\label{thm:IFC-augmentation}
	The functional \(z(t)=Fx(t)\) is intrinsically functional
	controllable if and only if there exist an integer
	\(d\in\{r,\ldots,n\}\) and a matrix
	\(R_1\in\mathbb R^{(d-r)\times n}\)
	(with \(R_1=\emptyset\) if \(d=r\)) such that
	\[
	\bar F:=
	\begin{pmatrix}
		F\\
		R_1
	\end{pmatrix},
	\qquad
	\rank(\bar F)=d,
	\]
	and
	\begin{IEEEeqnarray}{rcl}
		\rank
		\begin{pmatrix}
			\bar FA\\
			\bar F
		\end{pmatrix}
		&=&
		\rank(\bar F),
		\nonumber\\
		\rank
		\begin{pmatrix}
			\lambda\bar F-\bar FA & \bar FB
		\end{pmatrix}
		&=&
		\rank(\bar F),
		\qquad
		\forall\lambda\in\mathbb C.
		\nonumber
	\end{IEEEeqnarray}
\end{thm}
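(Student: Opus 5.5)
The plan is to prove both directions by translating the definition of an intrinsic functional realization into algebraic conditions on \(\bar F\), and then reducing "arbitrarily assignable rate of convergence" to controllability of the reduced pair \((A_{\bar z},B_{\bar z})\).

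\emph{Step 1 (closure condition).} For \(\bar z=\bar Fx\) we have \(\dot{\bar z}=\bar FAx+\bar FBu\). This can be written as \(A_{\bar z}\bar z+B_{\bar z}u\) for all \(x\) and \(u\) if and only if \(B_{\bar z}=\bar FB\) and \(\bar FA=A_{\bar z}\bar F\) for some \(A_{\bar z}\). The second identity says that the rows of \(\bar FA\) lie in the row space of \(\bar F\). By Lemma~\ref{lem:im-ker-duality} (items i and iv, with \(Y=\bar F^{\mathsf T}\), \(X=(\bar FA)^{\mathsf T}\)), this is exactly \(\rank\begin{pmatrix}\bar FA\\ \bar F\end{pmatrix}=\rank(\bar F)\).

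We may take \(\bar F\) of full row rank \(d\). Redundant appended rows can be discarded without changing \(\bar z\) as a function of \(x\). The rows of \(F\) must stay, since \(z\) is a sub-vector of \(\bar z\); if \(F\) itself were rank deficient, one would first pass to a maximal independent row subset. I expect this bookkeeping to need a short remark. With full row rank, \(A_{\bar z}=\bar FA\bar F^{-}\) is unique.

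\emph{Step 2 (rate assignment \(\Leftrightarrow\) controllability of the reduced pair).} The realization evolves in \(\mathbb R^d\), and \(\bar F\) is surjective, so every \(\bar z(0)\in\mathbb R^d\) is attainable from some \(x(0)\). Driving \(\bar z\) to the origin with arbitrarily assignable convergence rate, for all initial conditions, means the following: for every \(\alpha>0\) there is a feedback \(u=K\bar z\) (or more generally a control law) with \(\sigma(A_{\bar z}+B_{\bar z}K)\subset\{\Re\lambda<-\alpha\}\). By the classical pole-assignment theorem and its converse (uncontrollable modes are fixed), this holds if and only if \((A_{\bar z},\bar FB)\) is controllable.

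To keep the argument self-contained, I would handle the converse as follows. If some left eigenvector \(w\) satisfies \(w^{*}A_{\bar z}=\lambda w^{*}\) and \(w^{*}\bar FB=0\), then \(w^{*}\bar z(t)=e^{\lambda t}w^{*}\bar z(0)\) for every input. Hence the decay rate cannot exceed \(-\Re\lambda\).

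\emph{Step 3 (PBH in terms of \(\bar F\)).} Apply the classical PBH test: \(\rank(\lambda I-A_{\bar z}\;\;\bar FB)=d\) for all \(\lambda\in\mathbb C\). Since \(\bar F\) has full row rank and \(A_{\bar z}\bar F=\bar FA\),
\[
\begin{pmatrix}\lambda\bar F-\bar FA & \bar FB\end{pmatrix}
=\begin{pmatrix}\lambda I-A_{\bar z} & \bar FB\end{pmatrix}
\begin{pmatrix}\bar F&0\\0&I\end{pmatrix}.
\]
The right factor has full row rank \(d+m\), so the two ranks coincide. The PBH condition therefore equals the stated rank condition with \(\rank(\bar F)=d\).

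Combining Steps 1–3 gives both directions. Sufficiency uses pole placement with \(u=K\bar Fx\) to drive \(\bar z\), hence \(z\), at any prescribed rate. Necessity runs the chain backwards from the existence of a realization.

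The main obstacle I expect is Step 2's interpretation of the informal phrase "arbitrarily assignable rate of convergence." In particular, one must ensure that necessity is not circumvented by non-feedback or time-varying inputs. The invariant-mode argument above handles this, because it holds for arbitrary inputs.
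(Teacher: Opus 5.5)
Your proposal is correct and follows essentially the same route as the paper: the closure condition is reduced to $\rank\begin{pmatrix}\bar FA\\ \bar F\end{pmatrix}=\rank(\bar F)$, intrinsic controllability is identified with controllability of $(\bar FA\bar F^{-},\bar FB)$, and the PBH test is transferred to $\bar F$ via $\lambda\bar F-\bar FA=(\lambda I-A_{\bar z})\bar F$ together with the full row rank of $\bar F$. Your fixed-mode argument in Step~2 is a useful addition, since it justifies the equivalence between arbitrary rate assignment and controllability, which the paper takes directly from the definition. One caution on your rank-deficient-$F$ remark: $\rank(\bar F)=d$ already forces $F$ to have full row rank, and the paper tacitly assumes this as well, so replacing $F$ by a row subset would not rescue the statement as written.
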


\begin{proof}
	Since \(\bar F\) has full row rank,
$
	\bar F\bar F^{-}=I.
$
	Hence,
	\[
	\bar F\dot x
	=
	\bar FA\bar F^{-}\bar Fx
	+
	\bar FA(I-\bar F^{-}\bar F)x
	+
	\bar FBu.
	\]
	
	The augmentation \(\bar z=\bar Fx\) is an intrinsic functional realization if and only if
$
	\bar FA(I-\bar F^{-}\bar F)=\mathbf{0},
$
	which is equivalent to
	\[
	\rank
	\begin{pmatrix}
		\bar FA\\
		\bar F
	\end{pmatrix}
	=
	\rank(\bar F).
	\]
	In this case,
	\[
	\dot{\bar z}
	=
	\bar FA\bar F^{-}\bar z+\bar FBu.
	\]
	
	By Definition~\ref{def:IFC}, \(z(t)\) is intrinsically 
	controllable if and only if the induced pair
	\[
	(A_{\bar z},B_{\bar z})
	=
	(\bar FA\bar F^{-},\bar FB)
	\]
	is controllable. Since
$
	\bar FA=A_{\bar z}\bar F,
$
	we have
	\[
	\lambda\bar F-\bar FA
	=
	(\lambda I-A_{\bar z})\bar F.
	\]
	Because \(\bar F\) has full row rank,
	\[
	\rank
	\begin{pmatrix}
		(\lambda I-A_{\bar z})\bar F&\bar FB
	\end{pmatrix}
	=
	\rank
	\begin{pmatrix}
		\lambda I-A_{\bar z}&\bar FB
	\end{pmatrix}.
	\]
	The PBH test therefore gives
	\[
	\rank
	\begin{pmatrix}
		\lambda\bar F-\bar FA&\bar FB
	\end{pmatrix}
	=
	\rank(\bar F),
	\qquad
	\forall\lambda\in\mathbb C,
	\]
	which completes the proof.
\end{proof}

\begin{remark}\label{rem:IFC-feedback}
	When the conditions of Theorem~\ref{thm:IFC-augmentation} hold, the pair
	$(\bar FA\bar F^{-},\bar FB)$ is controllable. Hence, there exists a feedback gain \(K_{\bar z}\) such that the control law
	\[
	u(t)=-K_{\bar z}\bar Fx(t)
	\]
	places the poles of the augmented subsystem arbitrarily. Consequently,
	$\bar z(t)$ converges to the origin with an arbitrary rate of convergence,
	and therefore
$
	z(t)\to\mathbf{0}$
as $
	t\to\infty.
$
\end{remark}

Define
\[
\mathcal V_F
:=
\im\!\left(
\mathcal C_{(A^{\mathsf T},F^{\mathsf T})}
\right)=
\sum_{k=0}^{n-1}
\im\!\left((FA^k)^{\mathsf T}\right).
\]
Then $\mathcal V_F$ is the smallest
$A^{\mathsf T}$-invariant subspace containing
$\im(F^{\mathsf T})$.

Theorem~\ref{thm:IFC-augmentation} admits the following equivalent
subspace characterization, providing a direct geometric test for
Intrinsic Functional Controllability.

\begin{thm}[Intrinsic Subspace Characterization of IFC]
	\label{thm:IFC-subspace}
	The functional \(z(t)=Fx(t)\) is intrinsically
	controllable, or equivalently the triple $(A,B,F^{\mathsf T})$ is
	\emph{intrinsically functional controllable}, if and only if
	\[
	\mathcal V_F
	\cap
	\ker\!\left(\mathcal C_{(A,B)}^{\mathsf T}\right)
	=
	\{\mathbf{0}\}.
	\]
	Equivalently,
	$
	\rank\!\left(
	\mathcal C_{(A,B)}^{\mathsf T}
	\mathcal C_{(A^{\mathsf T},F^{\mathsf T})}
	\right)
	=
	\rank\!\left(
	\mathcal C_{(A^{\mathsf T},F^{\mathsf T})}
	\right).
	$
\end{thm}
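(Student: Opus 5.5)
The plan is to reduce Theorem~\ref{thm:IFC-augmentation} to a statement about $A^{\mathsf T}$-invariant subspaces, namely the row space of the augmentation $\bar F$. Throughout, $F$ is taken to have full row rank, as in Theorem~\ref{thm:TOC-main}. Theorem~\ref{thm:IFC-augmentation} needs this, since $\bar F=(F^{\mathsf T}\;R_1^{\mathsf T})^{\mathsf T}$ must have full row rank. Write $\mathcal N_B:=\ker(\mathcal C_{(A,B)}^{\mathsf T})$, which is $A^{\mathsf T}$-invariant, and for an admissible $\bar F$ set $\mathcal W:=\im(\bar F^{\mathsf T})$.

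\emph{Step 1 (invariance).} By Lemma~\ref{lem:im-ker-duality}, the condition $\rank\begin{pmatrix}\bar FA\\ \bar F\end{pmatrix}=\rank(\bar F)$ is equivalent to $\im(A^{\mathsf T}\bar F^{\mathsf T})\subseteq\im(\bar F^{\mathsf T})$, that is, to $\mathcal W$ being $A^{\mathsf T}$-invariant. Since $\mathcal W$ is $A^{\mathsf T}$-invariant and contains $\im(F^{\mathsf T})$, and $\mathcal V_F$ is the smallest such subspace, we get $\mathcal V_F\subseteq\mathcal W$.

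\emph{Step 2 (PBH condition as a trivial intersection).} I will show that, given invariance, the condition $\rank(\lambda\bar F-\bar FA\;\;\bar FB)=\rank(\bar F)$ for all $\lambda$ is equivalent to $\mathcal W\cap\mathcal N_B=\{\mathbf 0\}$.
\begin{itemize}
\item Suppose the rank drops at some $\lambda$. Then there is $\eta\neq\mathbf 0$ with $\eta^{\mathsf T}(\lambda\bar F-\bar FA)=\mathbf 0$ and $\eta^{\mathsf T}\bar FB=\mathbf 0$. Put $w:=\bar F^{\mathsf T}\eta$. Then $w\neq\mathbf 0$ because $\bar F$ has full row rank. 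Also $A^{\mathsf T}w=\lambda w$ and $B^{\mathsf T}w=\mathbf 0$, hence $B^{\mathsf T}(A^{\mathsf T})^kw=\lambda^kB^{\mathsf T}w=\mathbf 0$ for all $k$. So $w\in\mathcal W\cap\mathcal N_B$ is nonzero.
\item Conversely, suppose $\mathcal W\cap\mathcal N_B\neq\{\mathbf 0\}$, working over $\mathbb C$. This intersection is a nonzero $A^{\mathsf T}$-invariant subspace, so it contains an eigenvector $w=\bar F^{\mathsf T}\eta$ of $A^{\mathsf T}$. That $w$ satisfies $B^{\mathsf T}w=\mathbf 0$, and $\eta$ is a left null vector of $(\lambda\bar F-\bar FA\;\;\bar FB)$, so the rank drops.
\end{itemize}
Together with Theorem~\ref{thm:IFC-augmentation}, this gives: the functional is IFC if and only if there is an $A^{\mathsf T}$-invariant $\mathcal W\supseteq\im(F^{\mathsf T})$ with $\mathcal W\cap\mathcal N_B=\{\mathbf 0\}$.

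\emph{Step 3 (conclusion).} For necessity, Step 1 gives $\mathcal V_F\subseteq\mathcal W$, so $\mathcal V_F\cap\mathcal N_B\subseteq\mathcal W\cap\mathcal N_B=\{\mathbf 0\}$. For sufficiency, take $\mathcal W=\mathcal V_F$, which is real, $A^{\mathsf T}$-invariant and contains $\im(F^{\mathsf T})$. Extend the linearly independent rows of $F$ to a basis of $\mathcal V_F$ using rows $R_1$. This gives $d=\dim\mathcal V_F\in\{r,\ldots,n\}$ and $\rank(\bar F)=d$, and Steps 1--2 show both rank conditions of Theorem~\ref{thm:IFC-augmentation} hold.

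For the rank reformulation, note that $\rank(\mathcal C_{(A,B)}^{\mathsf T}M)=\rank(M)-\dim(\im M\cap\ker\mathcal C_{(A,B)}^{\mathsf T})$. Taking $M=\mathcal C_{(A^{\mathsf T},F^{\mathsf T})}$, whose image is $\mathcal V_F$, gives the stated equality.

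The main obstacle is Step 2, specifically the converse direction. It needs a genuine eigenvector inside $\mathcal W\cap\mathcal N_B$, which may only exist over $\mathbb C$, and that eigenvector must be pulled back through $\bar F^{\mathsf T}$ to a left null vector of the PBH pencil. The full row rank of $\bar F$ is essential there, and this is why $F$ must have independent rows.
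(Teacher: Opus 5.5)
Your proof is correct, and its skeleton is the paper's: necessity from $\mathcal V_F\subseteq\mathcal W=\im(\bar F^{\mathsf T})$, sufficiency by taking $\im(\bar F^{\mathsf T})=\mathcal V_F$, and the rank form from $\im\mathcal C_{(A^{\mathsf T},F^{\mathsf T})}=\mathcal V_F$. Your rank--nullity identity spells out what the paper only asserts, and the paper also uses your full-row-rank assumption on $F$, though only tacitly.

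Where you differ is the key equivalence ``induced pair controllable $\iff\mathcal W\cap\ker(\mathcal C_{(A,B)}^{\mathsf T})=\{\mathbf 0\}$''. The paper gets it from the Kalman identity $\mathcal C_{(A_{\bar z},\bar FB)}=\bar F\mathcal C_{(A,B)}$, which follows from $\bar FA^k=A_{\bar z}^k\bar F$. Controllability is then $\ker(\mathcal C_{(A,B)}^{\mathsf T}\bar F^{\mathsf T})=\{\mathbf 0\}$, and full column rank of $\bar F^{\mathsf T}$ turns this into the trivial-intersection condition. That is the same one-line, real, eigenvector-free computation as in Theorem~\ref{thm:TOC-main}.

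You instead start from the PBH form of Theorem~\ref{thm:IFC-augmentation} and pull eigenvectors of $A^{\mathsf T}$ back through $\bar F^{\mathsf T}$. This costs a complexification. In your first bullet $w$ may be complex, so take its real or imaginary part, which lies in the real intersection because both subspaces are real.

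In return, your route buys two things. First, it names the obstruction concretely: IFC fails exactly when some $w\in\mathcal V_F$ satisfies $A^{\mathsf T}w=\lambda w$ and $B^{\mathsf T}w=\mathbf 0$. Second, it carries over almost verbatim to Corollary~\ref{cor:IFS-subspace} by restricting to $\Re(\lambda)\ge0$. The nonzero invariant subspace $\mathcal W\cap\mathcal X_{+}\cap\ker(\mathcal C_{(A,B)}^{\mathsf T})$ must contain an eigenvector with $\Re(\lambda)\ge0$. This argument covers arbitrary realizations $\mathcal W\supseteq\mathcal V_F$, which the paper's proof of that corollary does not address. The Kalman identity alone does not localize to the unstable modes.
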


\begin{proof}
	Suppose the triple $(A,B,F^{\mathsf T})$ is intrinsically functional controllable. Then there
	exists a controllable intrinsic functional realization
	\(\bar z=\bar Fx\) with \(z(t)\) a sub-vector of \(\bar z(t)\). Let
	\[
	\mathcal W:=\im(\bar F^{\mathsf T}).
	\]
	Since \(\bar FA=A_{\bar z}\bar F\), \(\mathcal W\) is
	\(A^{\mathsf T}\)-invariant and contains \(\im(F^{\mathsf T})\).
	Therefore,
$
	\mathcal V_F\subseteq\mathcal W.
$
	Moreover,
	\[
	\mathcal C_{(A_{\bar z},\bar FB)}
	=
	\bar F\mathcal C_{(A,B)}.
	\]
	Controllability of \((A_{\bar z},\bar FB)\) is therefore equivalent to
	\[
	\mathcal W
	\cap
	\ker\!\left(
	\mathcal C_{(A,B)}^{\mathsf T}
	\right)
	=
	\{\mathbf{0}\}.
	\]
	Since \(\mathcal V_F\subseteq\mathcal W\), it follows that
	$
	\mathcal V_F
	\cap
	\ker\!\left(
	\mathcal C_{(A,B)}^{\mathsf T}
	\right)
	=
	\{\mathbf{0}\}.
	$
	
	Conversely, suppose
$
	\mathcal V_F
	\cap
	\ker\!\left(
	\mathcal C_{(A,B)}^{\mathsf T}
	\right)
	=
	\{\mathbf{0}\}.
	$
	Choose a full-row-rank matrix \(\bar F\) satisfying
	\[
	\im(\bar F^{\mathsf T})=\mathcal V_F.
	\]
	Since \(\mathcal V_F\) is \(A^{\mathsf T}\)-invariant, there exists
	\(A_{\bar z}\) such that
	\[
	\bar FA=A_{\bar z}\bar F.
	\]
	Hence, \(\bar z=\bar Fx\) is an intrinsic functional realization
	containing \(z(t)\). Furthermore,
	\[
	\im(\bar F^{\mathsf T})
	\cap
	\ker\!\left(
	\mathcal C_{(A,B)}^{\mathsf T}
	\right)
	=
	\{\mathbf{0}\}.
	\]
	Since \(\bar F^{\mathsf T}\) has full column rank, this is equivalent to
	\[
	\rank\!\left(
	\bar F\mathcal C_{(A,B)}
	\right)
	=
	\rank(\bar F).
	\]
	Thus, \((A_{\bar z},\bar FB)\) is controllable, and \(z(t)\) is
	intrinsically controllable.
	
	Finally, the equivalent rank condition follows from
$
	\mathcal V_F
	=
	\im\!\left(
	\mathcal C_{(A^{\mathsf T},F^{\mathsf T})}
	\right).
$
\end{proof}

Intrinsic Functional Stabilizability is the natural counterpart of
Intrinsic Functional Controllability: the intrinsic realization must
still have closed functional dynamics, but stabilizability is required
only for modes with eigenvalues in the closed right-half plane.
Let
$
\mathcal X_{+}
$
denote the generalized eigenspace of \(A^{\mathsf T}\) associated with
all eigenvalues satisfying \(\Re(\lambda)\ge0\).

\begin{definition}[Intrinsic Functional Stabilizability]
	\label{def:IFS}
	The functional \(z(t)\) is intrinsically stabilizable, or equivalently
	the triple \((A,B,F^{\mathsf T})\) is \emph{intrinsically functional
		stabilizable}, if there exists an intrinsic functional realization
	\(\bar z(t)\), with \(z(t)\) a sub-vector of \(\bar z(t)\), that can be
	asymptotically driven to the origin.
\end{definition}

\begin{corol}[Intrinsic Subspace Characterization of IFS]
	\label{cor:IFS-subspace}
	The triple \((A,B,F^{\mathsf T})\) is \emph{intrinsically functional
		stabilizable} if and only if
	$
	\mathcal V_F
	\cap\mathcal X_{+}
	\cap
	\ker\!\left(\mathcal C_{(A,B)}^{\mathsf T}\right)
	=
	\{\mathbf{0}\}.
	$
\end{corol}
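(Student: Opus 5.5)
I will mirror the proof of Theorem~\ref{thm:IFC-subspace}, replacing controllability of the induced pair $(A_{\bar z},B_{\bar z})=(\bar FA\bar F^{-},\bar FB)$ by stabilizability. As in the proof of Theorem~\ref{thm:IFC-augmentation}, an intrinsic functional realization $\bar z=\bar Fx$ with full-row-rank $\bar F$ satisfies $\bar FA=A_{\bar z}\bar F$. Hence $\mathcal W:=\im(\bar F^{\mathsf T})$ is $A^{\mathsf T}$-invariant and contains $\im(F^{\mathsf T})$, so $\mathcal V_F\subseteq\mathcal W$. Since $\bar z$ evolves autonomously under $u$, it can be asymptotically driven to the origin for every initial condition exactly when $(A_{\bar z},\bar FB)$ is stabilizable. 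By the classical PBH test this means: there is no left eigenvector $w$ of $A_{\bar z}$ with $\Re(\lambda)\ge0$ and $w^{\mathsf T}\bar FB=0$, or more strongly, no nonzero $w\in\ker(\mathcal C_{(A_{\bar z},\bar FB)}^{\mathsf T})$ lying in the unstable generalized eigenspace of $A_{\bar z}^{\mathsf T}$.

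The key translation step maps $w\mapsto \bar F^{\mathsf T}w$. This map is injective because $\bar F$ has full row rank, and it carries $\mathbb C^d$ onto $\mathcal W$. It intertwines $A_{\bar z}^{\mathsf T}$ with $A^{\mathsf T}|_{\mathcal W}$, since $A^{\mathsf T}\bar F^{\mathsf T}=\bar F^{\mathsf T}A_{\bar z}^{\mathsf T}$. Consequently, generalized eigenspaces of $A_{\bar z}^{\mathsf T}$ for $\Re(\lambda)\ge0$ map onto $\mathcal W\cap\mathcal X_+$. Using $\mathcal C_{(A_{\bar z},\bar FB)}=\bar F\mathcal C_{(A,B)}$, the uncontrollable subspace of the induced pair maps onto $\mathcal W\cap\ker(\mathcal C_{(A,B)}^{\mathsf T})$. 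It follows that stabilizability of $(A_{\bar z},\bar FB)$ is equivalent to
\[
\mathcal W\cap\mathcal X_+\cap\ker\!\left(\mathcal C_{(A,B)}^{\mathsf T}\right)=\{\mathbf 0\}.
\]
Here I use that $\mathcal W\cap\ker(\mathcal C_{(A,B)}^{\mathsf T})$ is $A^{\mathsf T}$-invariant, so it decomposes into generalized eigenspaces. It therefore meets $\mathcal X_+$ nontrivially iff it contains an eigenvector with $\Re(\lambda)\ge0$, which is exactly the classical PBH condition.

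With this equivalence, necessity is immediate from $\mathcal V_F\subseteq\mathcal W$. For sufficiency I choose $\bar F$ with $\im(\bar F^{\mathsf T})=\mathcal V_F$. This matrix has full row rank, contains the rows of $F$ (after a change of basis I take $F$'s independent rows first), and is intrinsic by $A^{\mathsf T}$-invariance. The equivalence above with $\mathcal W=\mathcal V_F$ then gives stabilizability. A stabilizing $K_{\bar z}$ yields $u=-K_{\bar z}\bar Fx$ and $\bar z\to\mathbf 0$, as in Remark~\ref{rem:IFC-feedback}.

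I expect the main obstacle to be the translation step, specifically that $\mathcal W\cap\mathcal X_+$ equals the image of the unstable generalized eigenspace of $A_{\bar z}^{\mathsf T}$. I would prove it by noting that $\bar F^{\mathsf T}$ conjugates $(\lambda I-A_{\bar z}^{\mathsf T})^k$ to $(\lambda I-A^{\mathsf T})^k$ restricted to $\mathcal W$. A second technicality is that $F$ need not have full row rank. I would handle it by taking a maximal independent subset of rows of $F$, since the sub-vector requirement is then met up to a linear combination.
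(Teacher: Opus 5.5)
Your proposal is correct and follows essentially the same route as the paper. You take $\bar F$ with $\im(\bar F^{\mathsf T})=\mathcal V_F$, and use the intertwining relation $A^{\mathsf T}\bar F^{\mathsf T}=\bar F^{\mathsf T}A_{\bar z}^{\mathsf T}$ together with $\mathcal C_{(A_{\bar z},\bar FB)}=\bar F\mathcal C_{(A,B)}$ to identify the unstable modes of the induced pair with $\mathcal V_F\cap\mathcal X_+$ and its uncontrollable subspace with $\mathcal V_F\cap\ker(\mathcal C_{(A,B)}^{\mathsf T})$, then conclude by the PBH stabilizability test. Your explicit necessity step for an arbitrary intrinsic realization $\mathcal W\supseteq\mathcal V_F$, mirroring the proof of Theorem~\ref{thm:IFC-subspace}, is a detail the paper's proof leaves implicit, since it analyzes only the minimal realization.
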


\begin{proof}
	Choose a full-row-rank matrix \(\bar F\) such that
	$
	\im(\bar F^{\mathsf T})=\mathcal V_F.
	$
	Since \(\mathcal V_F\) is \(A^{\mathsf T}\)-invariant, there exists
	\(A_{\bar z}\) such that
	$
	\bar FA=A_{\bar z}\bar F,
$
	and hence
	$
	A^{\mathsf T}\bar F^{\mathsf T}
	=
	\bar F^{\mathsf T}A_{\bar z}^{\mathsf T}.
	$
	Therefore, the nonstable modes of the induced pair
	\((A_{\bar z},\bar FB)\) correspond to those contained in
	\(\mathcal V_F\cap\mathcal X_{+}\). Moreover, its uncontrollable
	subspace corresponds to
	$
	\mathcal V_F
	\cap
	\ker\!\left(\mathcal C_{(A,B)}^{\mathsf T}\right).
	$
	The induced pair is stabilizable if and only if its uncontrollable
	subspace contains no nonzero generalized eigenvector associated with
	an eigenvalue satisfying \(\Re(\lambda)\ge0\). This is equivalent to
	$
	\mathcal V_F
	\cap\mathcal X_{+}
	\cap
	\ker\!\left(\mathcal C_{(A,B)}^{\mathsf T}\right)
	=
	\{\mathbf{0}\}.
	$
\end{proof}

\section{Generalized Separation Principle}

The classical separation principle states that controller and observer
design decouple for controllable and observable systems.
The following theorem establishes an analogous result at the functional
level.

\begin{thm}[Generalized Separation Principle]
	\label{thm:GSP}
	Suppose that
	\begin{enumerate}
		\item $(A,B,F^{\mathsf T})$ is \emph{intrinsically functional controllable}, and
		\item $(A,C,F)$ is \emph{functional observable}.
	\end{enumerate}
	Then there exist
	\begin{itemize}
		\item a functional state-feedback law
		$
		u(t)=-K_{\bar z}\widehat{\bar z}(t),
		$
		where $\bar z(t)=\bar Fx(t)$,
		and
		\item a functional observer producing an estimate $\widehat{\bar z}(t)$ of
		$\bar z(t)$, with estimation error $\bar z(t)-\widehat{\bar z}(t)\to\mathbf{0}$
		as $t\to\infty$,
	\end{itemize}
	such that
	$
	z(t)=Fx(t)\rightarrow{\bf0}$
	as 
	$t\rightarrow\infty,$	and the controller and observer can be designed independently.
\end{thm}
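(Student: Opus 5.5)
The plan is to build the controller from the augmentation supplied by Theorem~\ref{thm:IFC-subspace}, then show that the same augmentation is functional observable, so that a standard observer can be attached and the closed loop has a triangular (cascade) structure.

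\textbf{Controller.} Since $(A,B,F^{\mathsf T})$ is intrinsically functional controllable, I will take the augmentation from the converse part of Theorem~\ref{thm:IFC-subspace}. Choose a full-row-rank $\bar F$ with $\im(\bar F^{\mathsf T})=\mathcal V_F$. Then $\bar FA=A_{\bar z}\bar F$ with $A_{\bar z}=\bar FA\bar F^{-}$, and by that theorem the pair $(A_{\bar z},B_{\bar z})=(A_{\bar z},\bar FB)$ is controllable. Because $\im(F^{\mathsf T})\subseteq\mathcal V_F$, I may order the rows so that $F$ is a block of $\bar F$, or more simply write $F=M\bar F$ for some matrix $M$. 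By Remark~\ref{rem:IFC-feedback}, pick $K_{\bar z}$ such that $A_{\bar z}-B_{\bar z}K_{\bar z}$ is Hurwitz, with its eigenvalues arbitrarily assigned. This gain depends only on $(A_{\bar z},B_{\bar z})$.

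\textbf{Observer.} The key observation is that functional observability of $(A,C,F)$ transfers to $(A,C,\bar F)$.
\begin{itemize}
\item By Lemma~\ref{lemma3}(iv), $\ker\mathcal O_{(A,C)}\subseteq\ker\mathcal O_{(A,F)}$.
\item Moreover $\ker\mathcal O_{(A,F)}=\mathcal V_F^{\perp}=\ker\bar F$, since $\mathcal V_F=\im\mathcal C_{(A^{\mathsf T},F^{\mathsf T})}=\im\mathcal O_{(A,F)}^{\mathsf T}$.
\item Hence $\ker\mathcal O_{(A,C)}\subseteq\ker\bar F$.
\end{itemize}

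I then pass to a Kalman observability decomposition $x\mapsto(x_o,x_{uo})$ via a nonsingular $T$. In these coordinates $\dot x_o=A_ox_o+B_ou$ and $y=C_ox_o$, with $(A_o,C_o)$ observable. The $x_o$-dynamics do not involve $x_{uo}$, because the unobservable subspace is $A$-invariant. Since $\bar F$ annihilates the unobservable subspace, $\bar Fx=\bar F_ox_o$ for some $\bar F_o$. I will use the Luenberger observer
\[
\dot{\hat x}_o=A_o\hat x_o+B_ou+L(y-C_o\hat x_o),\qquad \widehat{\bar z}=\bar F_o\hat x_o,
\]
with $L$ chosen so that $A_o-LC_o$ is Hurwitz. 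The error $e_o=x_o-\hat x_o$ then satisfies $\dot e_o=(A_o-LC_o)e_o$. This equation is independent of $u$, of $K_{\bar z}$, and of $x_{uo}$, so $\bar z-\widehat{\bar z}=\bar F_oe_o\to\mathbf 0$ exponentially.

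\textbf{Closed loop and separation.} Substituting $u=-K_{\bar z}\widehat{\bar z}$ into the intrinsic realization gives
\[
\dot{\bar z}=(A_{\bar z}-B_{\bar z}K_{\bar z})\bar z+B_{\bar z}K_{\bar z}\bar F_oe_o,\qquad \dot e_o=(A_o-LC_o)e_o.
\]
This system is block upper triangular in $(\bar z,e_o)$, so its spectrum is $\sigma(A_{\bar z}-B_{\bar z}K_{\bar z})\cup\sigma(A_o-LC_o)$. As a consequence, $K_{\bar z}$ and $L$ can be designed independently. A Hurwitz system driven by an exponentially decaying input converges to zero, so $\bar z\to\mathbf 0$ and therefore $z=M\bar z\to\mathbf 0$.

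\textbf{Main obstacle.} I expect the delicate point to be justifying that the closed loop need only be analysed in the coordinates $(\bar z,e_o)$. The remaining state components, and $x_{uo}$ in particular, may be unbounded, so the argument has to verify that they enter neither the $\bar z$-dynamics nor the error dynamics. The first holds by intrinsic closure, $\bar FA(I-\bar F^{-}\bar F)=\mathbf 0$. The second holds by the triangular Kalman structure.

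A secondary point is that the observer must estimate all of $\bar z$, not just $z$. This is exactly why I establish $\ker\mathcal O_{(A,C)}\subseteq\ker\bar F$ using the identification $\mathcal V_F^{\perp}=\ker\mathcal O_{(A,F)}$. If the reduced-order functional observers of \cite{ref3n} are preferred instead, the same inclusion is the hypothesis they require.
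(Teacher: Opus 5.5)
Your proposal is correct and follows the paper's proof. It uses the same minimal augmentation $\im(\bar F^{\mathsf T})=\mathcal V_F$ and the same transfer of functional observability to $\bar F$: your inclusion $\ker\mathcal O_{(A,C)}\subseteq\ker\bar F$ is the paper's $\mathcal V_F\subseteq\im\mathcal O_{(A,C)}^{\mathsf T}$. It also arrives at the same block upper-triangular closed loop in $(\bar z,e_o)$.

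The only difference is that you build the observer explicitly, as a Luenberger observer of the observable subsystem. The paper instead invokes the existence of a functional observer for $(A,C,\bar F)$ with Hurwitz error matrix $N$ and defers the triangular derivation to \cite{ref4n}, so your version is more self-contained.
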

\begin{proof}
	By Theorem~\ref{thm:IFC-subspace}, choose a full-row-rank matrix
	\(\bar F\) containing \(F\) such that
	$
	\im(\bar F^{\mathsf T})=\mathcal V_F.
	$
	Then
	\((\bar FA\bar F^{-},\bar FB)\) is controllable. Hence, there exists a
	feedback gain \(K_{\bar z}\) such that
	\(\bar FA\bar F^{-}-\bar FBK_{\bar z}\) has arbitrarily assignable poles.
	
	Since
	$
	\im(\bar F^{\mathsf T})
	=
	\mathcal V_F
	\subseteq
	\im\!\left(
	\mathcal O_{(A,C)}^{\mathsf T}
	\right),
	$
	the triple \((A,C,\bar F)\) is functional observable. Hence, a
	functional observer exists whose error matrix \(N\) is Hurwitz.
	With the observer-based control law
	\[
	u=-K_{\bar z}\widehat{\bar z},
	\]
	the closed-loop system assumes the block upper-triangular form
	\[
	\begin{pmatrix}
		\dot{\bar z}\\
		\dot{\epsilon}
	\end{pmatrix}
	=
	\begin{pmatrix}
		\bar FA\bar F^{-}-\bar FBK_{\bar z} & *\\
		\mathbf{0} & N
	\end{pmatrix}
	\begin{pmatrix}
		\bar z\\
		\epsilon
	\end{pmatrix},
	\]
	where the derivation is analogous to that in~\cite{ref4n}. Therefore,
	the closed-loop poles are
	\[
	\sigma(\bar FA\bar F^{-}-\bar FBK_{\bar z})\cup\sigma(N),
	\]
	so the controller and observer poles can be assigned independently.
	Hence,
	$
	\bar z(t)\to\mathbf{0},
	$
	and therefore
	$
	z(t)=Fx(t)\to\mathbf{0}
	$
	as \(t\to\infty\).
\end{proof}

\begin{remark}[Asymptotic Generalized Separation Principle]
	If $(A,B,F^{\mathsf T})$ is intrinsically functional stabilizable and
	$(A,C,F)$ is functional detectable, then there exist a functional
	controller and a functional observer such that
$
	z(t)=Fx(t)\rightarrow{\bf0},
$
	where stabilization and estimation are required only for the unstable
	functional dynamics. In particular, if $(A,B,F^{\mathsf T})$ is
	intrinsically functional controllable and $(A,C,F)$ is functional
	observable, then the poles of the functional controller and functional
	observer can be assigned arbitrarily. Otherwise, only the unstable eigenvalues associated with the
	functional controllable and functional observable dynamics can be
	assigned arbitrarily, while the remaining stable eigenvalues remain
	unchanged.
\end{remark}

Theorem~\ref{thm:GSP} establishes the Generalized Separation Principle
as an intrinsic property of $(A,B,C,F)$. The augmentation matrices
$R_1$ and $R_2$ are required only for realizing the observer-based
functional controller: $R_1$ extends $F$ to the matrix $\bar F$ used in
the intrinsic functional controller, while $R_2$ further extends
$\bar F$ so that $F$, $R_1$, and $R_2$ together span the observable
subspace, as required for the functional observer of~\cite{22new}. The
following lemma provides one explicit realization.

\begin{lemma}[Explicit realization of the augmentation matrices]
	\label{lem:R1R2}
	Suppose that
	\begin{enumerate}
		\item $(A,B,F^{\mathsf T})$ is \emph{intrinsically functional controllable}, and
		\item $(A,C,F)$ is \emph{functional observable}.
	\end{enumerate}
	Define
	\[
	\bar F=
	\begin{pmatrix}
		F\\
		R_1
	\end{pmatrix},
	\]
	where the rows of $R_1$ are chosen so that
	$
	\im(\bar F^{\mathsf T})=\mathcal V_F.
	$
	Let $T$ be a nonsingular matrix yielding the observability decomposition
	\[
	T^{-1}AT=
	\begin{pmatrix}
		\bar A & \mathbf{0}\\
		A_{21} & A_u
	\end{pmatrix},
	\qquad
	CT=
	\begin{pmatrix}
		\bar C & \mathbf{0}
	\end{pmatrix},
	\]
	where $h=n-n_u$, $A_u\in\mathbb R^{n_u\times n_u}$, and
	$(\bar A,\bar C)$ is observable. Writing
$
	\bar FT=
	\begin{pmatrix}
		\bar F_o & \mathbf{0}
	\end{pmatrix},
$
	where $\bar F_o\in\mathbb R^{d\times h}$ has full row rank,
	let $\bar F_o^\perp$ be any matrix whose rows complete those of
	$\bar F_o$ to a basis of $\mathbb R^h$, and define
$
	R_2=
	\begin{pmatrix}
		\bar F_o^\perp & {\bf 0}
	\end{pmatrix}
	T^{-1}.
$
	Then $R_1$ and $R_2$ constitute an admissible realization satisfying
	the augmentation conditions of Theorem~15 in~\cite{ref4n}.
\end{lemma}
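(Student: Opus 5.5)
We have to verify that $R_1$ and $R_2$ meet the augmentation conditions of Theorem~15 in~\cite{ref4n}. From how the paper describes them, those conditions say that (a) $\bar z=\bar Fx$ is an intrinsic functional realization whose induced pair is controllable, and (b) the rows of $F$, $R_1$ and $R_2$ together span the observable subspace $\im(\mathcal O_{(A,C)}^{\mathsf T})$, with $\begin{pmatrix}\bar F^{\mathsf T} & R_2^{\mathsf T}\end{pmatrix}$ of full column rank. The plan is to check (a) using the intrinsic results already proved, and (b) by working in the observability coordinates given by $T$.

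\textbf{Condition (a).} Hypothesis~1 together with Theorem~\ref{thm:IFC-subspace} gives $\mathcal V_F\cap\ker(\mathcal C_{(A,B)}^{\mathsf T})=\{\mathbf 0\}$. The matrix $\bar F$ is built from $F$ by appending rows until $\im(\bar F^{\mathsf T})=\mathcal V_F$, which is possible because $\im(F^{\mathsf T})\subseteq\mathcal V_F$; we take the appended rows linearly independent, so $\bar F$ has full row rank. Since $\mathcal V_F$ is $A^{\mathsf T}$-invariant, $\rank\begin{pmatrix}\bar FA\\ \bar F\end{pmatrix}=\rank(\bar F)$. The converse part of the proof of Theorem~\ref{thm:IFC-subspace} then shows that $(\bar FA\bar F^{-},\bar FB)$ is controllable. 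Equivalently, both rank conditions of Theorem~\ref{thm:IFC-augmentation} hold.

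\textbf{Condition (b).} First we justify the block form $\bar FT=\begin{pmatrix}\bar F_o&\mathbf 0\end{pmatrix}$.
\begin{itemize}
\item By hypothesis~2 and Lemma~\ref{lemma3}(iii), $\im(F^{\mathsf T})\subseteq\im(\mathcal O_{(A,C)}^{\mathsf T})$.
\item The observable subspace $\im(\mathcal O_{(A,C)}^{\mathsf T})$ is $A^{\mathsf T}$-invariant, and $\mathcal V_F$ is the smallest $A^{\mathsf T}$-invariant subspace containing $\im(F^{\mathsf T})$. Hence $\mathcal V_F\subseteq\im(\mathcal O_{(A,C)}^{\mathsf T})$, i.e.\ $\ker\mathcal O_{(A,C)}\subseteq\ker\bar F$.
\item In the coordinates given by $T$, the unobservable subspace is spanned by the last $n_u$ columns of $T$. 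Therefore $\bar FT$ vanishes on those columns, giving $\bar FT=\begin{pmatrix}\bar F_o&\mathbf 0\end{pmatrix}$.
\item $\bar F_o$ has full row rank because $\bar F$ does and $T$ is nonsingular.
\end{itemize}

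Next we compute the span of the stacked matrix. Write $\bar F_o^{\perp}$ for the completing rows, so that $\begin{pmatrix}\bar F_o\\ \bar F_o^{\perp}\end{pmatrix}$ is an $h\times h$ nonsingular matrix. Then
\[
\begin{pmatrix}\bar F\\ R_2\end{pmatrix}=\begin{pmatrix}\bar F_o&\mathbf 0\\ \bar F_o^{\perp}&\mathbf 0\end{pmatrix}T^{-1},
\]
so this matrix has rank $h$ and its row space is $\{w^{\mathsf T}T^{-1}: w\in\mathbb R^h\times\{\mathbf 0\}\}$. We also know $\mathcal O_{(A,C)}T=\begin{pmatrix}\mathcal O_{(\bar A,\bar C)}&\mathbf 0\end{pmatrix}$ with $\rank\mathcal O_{(\bar A,\bar C)}=h$, because $(\bar A,\bar C)$ is observable. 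Hence the row space of $\mathcal O_{(A,C)}$ is exactly the same set. So $F$, $R_1$ and $R_2$ together span the observable subspace, and they are linearly independent.

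\textbf{Main obstacle.} The linear algebra above is routine. The delicate part is matching it precisely to the hypotheses of Theorem~15 in~\cite{ref4n}. If that theorem states its conditions as rank identities, for example $\rank\begin{pmatrix}\mathcal O_{(A,C)}\\ \bar F\\ R_2\end{pmatrix}=\rank\begin{pmatrix}\bar F\\ R_2\end{pmatrix}=\rank\mathcal O_{(A,C)}$ together with the intrinsic closure of $\bar F$, each identity follows from the span equality just proved. The conversion goes through Lemma~\ref{lem:im-ker-duality}, applied once for each inclusion.
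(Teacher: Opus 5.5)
Your proposal is correct and follows essentially the same route as the paper. The controller-side conditions come from the $A^{\mathsf T}$-invariance of $\mathcal V_F$ and the intrinsic controllability hypothesis, and the observer-side conditions come from $\begin{pmatrix}\bar F\\ R_2\end{pmatrix}T=\begin{pmatrix}\bar F_o&\mathbf 0\\ \bar F_o^{\perp}&\mathbf 0\end{pmatrix}$ having a nonsingular left block, so that $F$, $R_1$, $R_2$ form a basis of the observable subspace; you also justify the zero block in $\bar FT$ via $\mathcal V_F\subseteq\im(\mathcal O_{(A,C)}^{\mathsf T})$, which the paper takes as given in the statement.
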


\begin{proof}
	By construction,
$
	\im(\bar F^{\mathsf T})=\mathcal V_F$,
	$\bar F=
	\begin{pmatrix}
		F\\
		R_1
	\end{pmatrix},
	$
	so conditions~(26a) and~(26b) of Theorem~15 in~\cite{ref4n}
	follow immediately from the defining properties of
	$\mathcal V_F$.
	
	Moreover,
$
	\bar FT=
	\begin{pmatrix}
		\bar F_o&\mathbf{0}
	\end{pmatrix},
$ 
$
	R_2T=
	\begin{pmatrix}
		\bar F_o^\perp&\mathbf{0}
	\end{pmatrix},
$
	and
	\(
	\begin{pmatrix}
		\bar F_o\\
		\bar F_o^\perp
	\end{pmatrix}
	\)
	is nonsingular. Hence,
	\(
	\begin{pmatrix}
		F\\
		R
	\end{pmatrix},
	\)
	where
	\(
	R=
	\begin{pmatrix}
		R_1\\
		R_2
	\end{pmatrix},
	\)
	forms a basis for the observable subspace. Since
	$(\bar A,\bar C)$ is observable, conditions~(25a) and~(25b) of
	Theorem~15 in~\cite{ref4n} are satisfied. Therefore,
	$R_1$ and $R_2$ constitute an admissible realization.
\end{proof}

\begin{remark}
	The construction of Lemma~\ref{lem:R1R2} also applies when
	$(A,B,F^{\mathsf T})$ is intrinsically functional stabilizable and
	$(A,C,F)$ is functional detectable. In this case, the observability
	decomposition can be refined as
	\[
	T^{-1}AT=
	\begin{pmatrix}
		A_o & \mathbf{0} & \mathbf{0}\\
		A_{21} & A_{us} & \mathbf{0}\\
		A_{31} & A_{32} & A_{uu}
	\end{pmatrix},
	\qquad
	CT=
	\begin{pmatrix}
		C_o & \mathbf{0} & \mathbf{0}
	\end{pmatrix},
	\]
	where $(A_o,C_o)$ is observable, the eigenvalues of $A_{us}$ lie in the
	open left-half complex plane, and the eigenvalues of $A_{uu}$ are
	unobservable and satisfy $\Re(\lambda)\ge0$. The matrix
	\[
	\bar A=
	\begin{pmatrix}
		A_o & \mathbf{0}\\
		A_{21} & A_{us}
	\end{pmatrix},
	\qquad
	\bar C=
	\begin{pmatrix}
		C_o & \mathbf{0}
	\end{pmatrix},
	\]
	is therefore detectable. Moreover, conditions~(25a), (25b), (26a), and
	(26b) of Theorem~15 in~\cite{ref4n} are required only for
	$
	\lambda\in\mathbb C, \Re(\lambda)\ge0.
	$
	Hence, the same realizations of $R_1$ and $R_2$ remain valid.
\end{remark}

\begin{remark}
	A functional observer-based functional controller can be designed
	according to the Generalized Separation Principle as follows.
	First, establish the intrinsic functional controllability of
	$(A,B,F^{\mathsf T})$ using Theorem~\ref{thm:IFC-subspace}. Then construct
	the control law
$
	u(t)=-K_{\bar z}\bar Fx(t),
$
	where
$
	\bar F=\begin{pmatrix}F\\R_1\end{pmatrix},
$
	with $R_1$ given by Lemma~\ref{lem:R1R2}, and $K_{\bar z}$ selected according to
	Remark~\ref{rem:IFC-feedback}. Next, verify the functional observability of
	$(A,C,F)$ using Corollary~\ref{cor:FO-main}. Finally, let
$
	L=
	\begin{pmatrix}
		F\\
		R_1\\
		R_2
	\end{pmatrix}
$
	be the observer matrix defined in \cite{22new}, where $R_2$ is given by
	Lemma~\ref{lem:R1R2}, and complete the observer design using the
	procedure of \cite{22new}.
\end{remark}

\section{Illustrative Example}
\textbf{Spacecraft Rendezvous:}
Consider the six-state linearized spacecraft rendezvous model near a
translunar halo orbit \cite{Jones1993},
$
\dot x=Ax+Bu,\;
y=Cx,
$
where
$
x=
(\xi,\eta,\zeta,\dot\xi,\dot\eta,\dot\zeta)^{\mathsf T},
$
and
{\small
	\[
	A=
	\begin{pmatrix}
		0&0&0&1&0&0\\
		0&0&0&0&1&0\\
		0&0&0&0&0&1\\
		7.380861&0&0&0&2&0\\
		0&-2.190431&0&-2&0&0\\
		0&0&-3.190431&0&0&0
	\end{pmatrix},
	\]
}
{\small
	\[
	B=
	\begin{pmatrix}
		B_1&B_2&B_3
	\end{pmatrix}
	=
	\begin{pmatrix}
		0&0&0\\
		0&0&0\\
		0&0&0\\
		1&0&0\\
		0&1&0\\
		0&0&1
	\end{pmatrix},
	\qquad
	C=
	\begin{pmatrix}
		I_3&0
	\end{pmatrix}.
	\]
}

The nominal system is controllable and observable.

\medskip
\noindent\textit{Actuator failure:}

Suppose the $\zeta$-axis thruster fails, giving
$
\bar B=(B_1\;\;B_2\;\;0).
$
A conventional full-state feedback controller
$u(t)=-Kx(t)$, where
{\small
	\[
	K=
	\begin{pmatrix}
		0.015&-0.253&-0.924&1.003&-0.931&-0.351\\
		0.082&-1.714&-1.257&1.462&0.045&-0.002\\
		0.867&-1.181&0.017&0.360&0.496&0.406
	\end{pmatrix},
	\]
}
stabilizing $(A,B)$ loses stability after the failure, since
$A-\bar BK$ has a real eigenvalue at
$
\lambda=0.742,
$
which lies in the open right-half plane.
The PBH test shows, following the actuator failure, the generalized eigenspaces associated with
$
\lambda=\pm j1.786178
$
become uncontrollable.
Applying Theorem~\ref{thm:FC-main}, every controllable functional matrix
must have the form
$
F=
\begin{pmatrix}
	*&*&0&*&*&0
\end{pmatrix},
$
where $*$ denotes an arbitrary real number.
Choosing
$
F=
\begin{pmatrix}
	1&1&0&1&1&0
\end{pmatrix},
$
applying Theorem~\ref{thm:IFC-subspace}, the triple $(A,B,F^{\mathsf T})$
 is intrinsically functional controllable.
An admissible augmentation is
{\small 
\[
\bar F=
\begin{pmatrix}
	F\\
	R_1
\end{pmatrix}
=
\begin{pmatrix}
	1&1&0&1&1&0\\
	0&1&0&0&0&0\\
	0&0&0&1&0&0\\
	0&0&0&0&1&0
\end{pmatrix},
\]}
where
$
\im(\bar F^{\mathsf T})=\mathcal V_F,
$
thereby realizing the smallest
$A^{\mathsf T}$-invariant subspace generated by $F$.

The functional controller
$
u(t)=-K_{\bar z}\bar Fx(t)
$
can be designed according to Remark~\ref{rem:IFC-feedback}, yielding
{\small
	\[
	K_{\bar z}=
	\begin{pmatrix}
		10.984&-13.351&-6.734&-9.951\\
		-1.118&6.324&-1.332&6.869\\
		0&0&0&0
	\end{pmatrix},
	\]
}
which assigns the poles of the intrinsic functional closed-loop
subsystem to
$
\{-1,-2,-3,-4\}.
$

\medskip
\noindent\textit{Sensor failure:}

Suppose only the $\xi$-position sensor remains available,
{\small 
\[
C=C_1=
\begin{pmatrix}
	1&0&0&0&0&0
\end{pmatrix}.
\]}

The pair $(A,C_1)$ is not observable. However,
Corollary~\ref{cor:FO-main} shows that the functional
$z(t)=Fx(t)$ remains functional observable since
$F$ annihilates the unobservable subspace
$\operatorname{span}\{e_3,e_6\}$.
Therefore, by Theorem~\ref{thm:GSP}, a functional observer and
functional controller can be designed independently so that
$
Fx(t)\rightarrow\mathbf{0},
$
even though the complete state is, in general, neither stabilizable nor
observable. Since
$
\operatorname{rank}(\bar F)
=
\operatorname{rank}\mathcal O_{(A,C_1)}=4,
$
no additional augmentation is required, i.e.,
$
R_2=\emptyset$. In
general, however, additional augmentation may be required, in which case
\(R_2\neq\emptyset\).
The functional observer is then obtained using the procedure of
\cite{22new}, with the functional matrix
$L=\bar F$ in the notation of \cite{22new}.

The resulting observer-based functional controller preserves the
assigned poles
$
\{-1,-2,-3,-4\},
$
while the uncontrollable modes remain at
$
\lambda=\pm j1.786178.
$
Hence, no closed-loop pole lies in the open right-half plane. The
functional controllable dynamics are asymptotically stabilized,
whereas the remaining out-of-plane dynamics remain bounded and
oscillatory. Consequently, although complete asymptotic stabilization of the full state is no longer achievable following the actuator failure, the spacecraft remains in the vicinity of the desired rendezvous trajectory, thereby maintaining safe operation until the actuator fault is rectified. This illustrates the practical robustness of the proposed intrinsic functional control framework.

\section{Conclusion}
This paper developed a unified PBH-style framework for functional properties by establishing necessary and sufficient eigenvalue-based conditions for Functional Controllability, Functional Stabilizability, Functional Observability, and Functional Detectability. It was shown that Target Output Controllability does not, in general, admit an independent eigenvalue-wise characterization, and a necessary and sufficient eigenstructure characterization was derived. The paper also introduced Intrinsic Functional Controllability and Intrinsic Functional Stabilizability as structural properties for augmentation-based functional controller synthesis, leading to a Generalized Separation Principle that encompasses the classical separation principle as a special case.

\end{document}